\newtheorem{fact}[theorem]{Fact}
\newcommand\trans[3]{#1 \xrightarrow[]{#2}\hspace{-.3em}\mathstrut^\ast \hspace{.2em} #3}
\newcommand\transo[3]{#1 \xrightarrow[]{#2}\hspace{-.3em}\mathstrut^\omega \hspace{.2em} #3}
\renewcommand\problem[2]{\begin{description}[topsep=0pt]\item[Input:] #1\item[Question:]#2\end{description}}
\begin{document}
\title{Sturmian and infinitely desubstitutable words accepted by an $\omega$-automaton}
\titlerunning{Sturmian and infinitely desubstitutable words accepted by an $\omega$-automaton}
% If the paper title is too long for the running head, you can set
% an abbreviated paper title here
%
\author{Pierre Béaur\inst{1}\orcidID{0000-0002-4050-7709} \and
Benjamin Hellouin de Menibus\inst{1}\orcidID{0000-0001-5194-929X}}
\authorrunning{P. Béaur and B. Hellouin de Menibus}
% First names are abbreviated in the running head.
% If there are more than two authors, 'et al.' is used.
%
\institute{Université Paris-Saclay, CNRS, Laboratoire Interdisciplinaire des Sciences du Numérique, 91400, Orsay, France}
\maketitle              % typeset the header of the contribution
\begin{abstract}
Given an $\omega$-automaton and a set of substitutions, we look at which accepted words can also be defined through these substitutions, and in particular if there is at least one.
We introduce a method using desubstitution of $\omega$-automata to describe the structure of preimages of accepted words under arbitrary sequences of homomorphisms: this takes the form of a meta-$\omega$-automaton.

We decide the existence of an accepted purely substitutive word, as well as the existence of an accepted fixed point.
In the case of multiple substitutions (non-erasing homomorphisms), we decide the existence of an accepted infinitely desubstitutable word, with possibly some constraints on the sequence of substitutions (\emph{e.g.} Sturmian words or Arnoux-Rauzy words).
As an application, we decide when a set of finite words codes \emph{e.g.} a Sturmian word.
As another application, we also show that if an $\omega$-automaton accepts a Sturmian word, it accepts the image of the full shift under some Sturmian morphism.

\keywords{Substitutions  \and $\omega$-automata \and Sturmian words \and decidability.}
\end{abstract}

\section{Introduction}

One-dimensional symbolic dynamics is the study of infinite words and their associated dynamical structures, and is linked with combinatorics on words.
Two classical methods to generate words are the following: on the one hand, sofic shifts are the set of infinite walks on a labeled graph (which can be considered as an $\omega$-automaton) \cite{lind_marcus_1995}; 
on the other hand, the substitutive approach consists in iterating a word homomorphism on an initial letter. 
The latter method was introduced by Axel Thue as a way to create counterexamples to conjectures in combinatorics on words \cite{berstel_1992}.

These two constructions usually build words and languages which are of a very different nature.
%Substitutive words have self-similar structures, which are not found in regular languages.
%Substitutive words tend to have self-similar structures on a global scale, and are used to generate minimal aperiodic structures ; on the other hand, sofic shifts are often non minimal, and are known to generate periodic words, that is words with a self-similar structure on a local scale.
On the one hand, substitutive words tend to have a self-similar structure, and are used to generate minimal aperiodic subshifts ; on
the other hand, sofic shifts always contain ultimately periodic words and cannot be minimal if they contain a non-periodic word.
We aim at deciding when a given $\omega$-automaton accepts a word with a given substitutive structure, and study the properties of sets of such accepted words.
Carton and Thomas provided a method to decide this question in the case of substitutive or morphic words on Büchi $\omega$-automata, using verification theory and semigroups of congruence \cite{carton_thomas_2002}.
This result was partially reproved by Salo \cite{salo_2022}, using a more combinatorial point of view.
For the last 20 years, the substitutive approach (iterating a single homomorphism) has been generalized to the S-adic approach \cite{ferenczi_1996} that lets one alternate betweeen multiple substitutions.
This more general framework lets us describe other natural classes, such as the family of Sturmian words.

In this paper, we develop a new method based on desubstitutions of $\omega$-automata.
We can express the preimages of an $\omega$-automaton by any sequence of substitutions through a meta-$\omega$-automaton, whose vertices are $\omega$-automata and whose edges are labeled by substitutions.
We use this meta-$\omega$-automaton to decide whether an $\omega$-automaton accepts a purely substitutive word (giving an alternative proof of \cite{carton_thomas_2002}), or a fixed point of a substitution, or a morphic word, or an infinitely desubstitutable word (by a set of substitutions).
The method is flexible enough to enforce additional constraints on the directive sequences of substitutions, which is powerful enough for example to decide whether an $\omega$-automaton accepts a Sturmian word.
A consequence is the decidability of whether a given set of finite words codes some Sturmian word (or from any family of words with an $S$-adic characterization).
We also describe the set of directive sequences of words accepted by some $\omega$-automaton, which is an $\omega$-regular set.

The meta-$\omega$-automaton also provides a more combinatorial insight on how Sturmian words and $\omega$-regular languages interact: namely, that an $\omega$-automaton accepts a Sturmian word if, and only if, it accepts the image of the full shift under a Sturmian morphism.

\section{Definitions}
\subsection{Words and $\omega$-automata}

An alphabet $\mathcal{A}$ is a finite set of symbols.
The set of finite words on $\mathcal{A}$ is denoted as $\mathcal{A}^*$, and contains the empty word.
A (mono)infinite word  is an element of $\mathcal{A}^{\mathbb{N}}$.
It is usual to write $x = x_0 x_1 x_2 x_3 \dots$ where $x_i = x(i) \in \mathcal{A}$.
If $x$ is a word, $|x|$ is the length of the word (if $x$ is infinite, then $|x| = \infty$).
For a word $x$ and $0 \leq j \leq k < |x|$, $x_{\llbracket j,k \rrbracket}$ is the word $x_j x_{j+1} x_{j+2} \dots x_{k-1} x_{k}$.
We denote $w \sqsubseteq_p x$ when $w$ is a prefix of $x$, that is, $w = x_{\llbracket 0,k \rrbracket}$.

It is possible to endow $\mathcal{A}^\mathbb{N}$ with a topology, called \emph{the prodiscrete topology}.
The prodiscrete topology is defined by the clopen basis $[w]_n = \{x \in \mathcal{A}^\mathbb{N}\ |\ x_n x_{n+1} \dots x_{n+|w|-1} = w\}$ for $w \in \mathcal{A}^\ast$.
To this topology, we can adjunct a dynamic with the shift operator $S$:
\[S : \left(\begin{array}{ccc}
    \mathcal{A}^\mathbb{N} & \rightarrow & \mathcal{A}^\mathbb{N} \\
    x = x_0 x_1 x_2 x_3 \dots & \mapsto & S(x) = x_1 x_2 x_3 x_4 \dots 
\end{array}\right)\]
A set $X \subseteq \mathcal{A}^\mathbb{N}$ is called \emph{a shift (space)} if it is stable by $S$ and closed for the prodiscrete topology.
In particular, $\mathcal{A}^\mathbb{N}$ is a shift space, called \emph{the full shift (space)}.

We now introduce the main computational model of this paper: $\omega$-automata.

\begin{definition}[$\omega$-automaton]
An $\omega$-automaton $\mathfrak{A}$ is a tuple $(\mathcal{A}, Q, I, T
%, Acc
)$, where $\mathcal{A}$ is an alphabet, $Q$ is a finite set of states, $I \subseteq Q$ is the set of initial states, $T \subseteq Q \times \mathcal{A} \times Q$ is the set of transitions of $\mathfrak{A}$.
% and $Acc  \subseteq Q^\mathbb{N} \cup Q^\ast$ is an acceptance condition (see the following definition).
\end{definition}

We extend several classical notions from finite automata.
We write transitions as $q_s \xrightarrow[]{a} q_t \in T$.

\begin{definition}[Computations and walks]
For $n \geq 1$ or $n = \infty$, a sequence $(q_k)_{0 \leq k \leq n}$ with $q_k \in Q$ is a \emph{walk} in $\mathfrak{A}$ if there is $(a_k)_{1\leq k\leq n} \subseteq \mathcal{A}$ such that for all $0 \leq k \leq n-1$, $q_k \xrightarrow[]{a_{k+1}} q_{k+1}\in T$. We then write $q_0 \xrightarrow[]{a_1} q_1 \xrightarrow[]{a_2} q_2 \xrightarrow[]{a_3} \cdots \xrightarrow[]{a_n} q_{n}$.
The word $w = (a_k)_{1 \leq k \leq n}$ \emph{labels} the  walk, and we call \emph{computation} a labeled walk.
If the computation begins with an initial state, $w$ is \emph{accepted} by $\mathfrak A$.

%A  walk $(q_k)_{0 \leq k \leq n}$ is \emph{accepting} if it is in $Acc$ and $q_0\in I$.
%A word $w = (a_k)_{0 \leq k \leq n-1}$ is \emph{accepted} by $\mathfrak{A}$ if it labels an accepting walk $(q_k)_{0 \leq k \leq n}$, and $q_0 \xrightarrow[]{a_1} q_1 \xrightarrow[]{a_2} q_2 \xrightarrow[]{a_3} \cdots \xrightarrow[]{a_n} q_{n}$ is a \emph{computation accepting} $w$.

%To be clear, a computation is an accepting walk associated with an accepted word that labels it.
\end{definition}

In the literature, $\omega$-automata usually have an acceptance condition (such as the Büchi condition \cite{thomas_1997}).
In this paper, we will consider $\omega$-automata to have the largest acceptance condition: every walk beginning with an initial state is accepting.
This is a weaker model than Büchi $\omega$-automata.

%The most classical type of acceptance condition is the Büchi condition \cite{thomas_1997}: one chooses a set $Q_F \subseteq Q$ of final states, and only the  walks visiting $Q_F$ infinitely many times are accepted.
%However, in contrast to much of the literature, in this paper, when $Acc$ is not specified, we will consider that $Acc = Q^\mathbb{N} \cup Q^\ast$, i.e. every  walk is accepted.

\begin{definition}[Language of an $\omega$-automaton]
Let $\mathfrak{A}$ be an $\omega$-automaton.
The language of \textbf{finite} words of $\mathfrak{A}$ is $\mathcal{L}_F(\mathfrak{A}) = \{ w \in \mathcal{A}^*\ |\ w \text{ is accepted by } \mathfrak{A} \}$. The language of \textbf{infinite} words of $\mathfrak{A}$ is  $\mathcal{L}_\infty(\mathfrak{A})) \{ w \in \mathcal{A}^\mathbb{N} \ |\ w \text{ is accepted by } \mathfrak{A}\}$.
Then, the language of $\mathfrak{A}$ is $\mathcal{L}(\mathfrak{A}) = \mathcal{L}_F(\mathfrak{A}) \cup \mathcal{L}_\infty(\mathfrak{A})$.
\end{definition}

If all states of $\mathfrak A$ are initial ($I = Q$), its language of infinite words is a shift, called a \emph{sofic shift} \cite{lind_marcus_1995}.

\subsection{Substitutions}
\begin{definition}[Homomorphisms and substitutions]
A homomorphism is a function $\sigma : \mathcal{A}^\ast \rightarrow \mathcal{A}^\ast$ such that $\sigma(uv) = \sigma(u)\sigma(v)$ (concatenation) for all $u,v \in \mathcal{A}^\ast$.
The homomorphism $\sigma$ is extended to $\mathcal{A}^\mathbb{N} \rightarrow \mathcal{A}^\mathbb{N}$ by $\sigma(x_0 x_1 x_2 \dots) = \sigma(x_0) \sigma(x_1) \sigma(x_2) \dots$
A substitution is a \emph{nonerasing} homomorphism, that is, $\sigma(a) \neq \varepsilon$ for all letters $a \in \mathcal{A}$.
\end{definition}

\begin{definition}[Fixed points, purely substitutive, substitutive and morphic words]
Let $\sigma, \tau : \mathcal{A}^\mathbb{N} \rightarrow \mathcal{A}^\mathbb{N}$ be two homomorphisms.
An infinite word $x \in \mathcal{A}^\mathbb{N}$ is:
\begin{itemize}
    \item a \emph{fixed point} of $\sigma$ if $\sigma(x) = x$;
    \item a \emph{purely substitutive word} generated by $\sigma$ if there is a letter $a \in \mathcal{A}$ such that $x = \lim\limits_{n \rightarrow \infty} \sigma^n(a)$, where the limit is well-defined;
    \item a \emph{morphic word} generated by $\sigma$ and $\tau$ if $x = \tau(y)$, where $y$ is a purely substitutive word generated by $\sigma$;
    \item a \emph{substitutive word} generated by $\sigma$ if $x$ is a morphic word generated by $\sigma$ and a coding $\tau$, i.e. $\tau(\mathcal{A}) \subseteq \mathcal{A}$.
\end{itemize}
\end{definition}

It is now possible to extend these definitions to the case where we use multiple homomorphisms.
However, most of literature revolves around the use of multiple non-erasing homomorphisms (substitutions), and we will stick to this case.
Let $(\sigma_n)_{n \in \mathbb N}$ be a sequence of substitutions.
The equivalent of a fixed-point of one homomorphism is an \emph{infinitely desubstitutable word} by a sequence of substitutions:

\begin{definition}[Infinitely desubstitutable words and directive sequences]
    Let $\mathcal{S}$ be a finite set of substitutions on a single alphabet $\mathcal{A}$, and let $(\sigma_n)_{n \in \mathbb N} \subseteq \mathcal{S}$.
    An infinite word $x$ is \emph{infinitely desubstitutable} by $(\sigma_n)_{n \in \mathbb N}$ (called a \emph{directive sequence} of $x$) if, and only if, there exists a sequence of infinite words $(x_n)_{n \in \mathbb N}$ such that $x_0 = x$ and $x_{n} = \sigma_n(x_{n+1})$.
    An infinite word $x$ is infinitely desubstitutable by $\mathcal{S}$ if $x$ is infinitely desubstitutable by some directive sequence $(\sigma_n)_{n \in \mathbb N} \subseteq \mathcal{S}$.
\end{definition}

Just like for words, we write $\sigma_{\llbracket i,j \rrbracket} = \sigma_i \circ \sigma_{i+1} \circ \dots \circ \sigma_j$.
Then, by compactness of $\mathcal{A}^{\mathbb N}$, $x$ is infinitely desubstitutable by $(\sigma_n)_{n \in \mathbb N}$ if, and only if, there is a sequence of infinite words $(x_n)_{n \in \mathbb N}$ such that $x = \sigma_{\llbracket 0, n \rrbracket}(x_{n+1}) $ for all $n \geq 0$.

\section{Finding substitutive and infinitely desubstitutable words in $\omega$-automata}

\subsection{Desubstituting $\omega$-automata}

In this section, we explain our main technical tool: an effective transformation of $\omega$-automaton, called desubstitution.
We define it for the broad case of possibly erasing homomorphisms.

\begin{definition}[Desubstitution of an $\omega$-automaton]\label{def:DesubAuto}
Let $\mathfrak{A} = (\mathcal{A}, Q, I, T)$ be an $\omega$-automaton, and $\sigma$ a homomorphism.
We define $\sigma^{-1}(\mathfrak{A})$ as the $\omega$-automaton $(\mathcal{A}, Q, I, T')$ where, for all $q_1,q_2 \in Q$ and $a\in\mathcal A$, $q_1 \xrightarrow[]{a} q_2 \in T'$ iff $\trans{q_1}{\sigma(a)}{q_2}$ is a computation in $\mathfrak{A}$.
\end{definition}

In particular, in this case, we consider that $q \xrightarrow{\varepsilon} q$ is a computation.
Thus, if $\sigma(a) = \varepsilon$, the desubstituted automaton $\sigma^{-1}(\mathfrak{A})$ has a loop labeled by $a$ on every state.

For example, consider the following $\omega$-automaton $\mathfrak{A}$ and substitution $\sigma$ (Figure~\ref{fig:desubst}(a,b)).

\begin{figure}[ht!]
    \centering
    \begin{subfigure}[b]{0.2\textwidth}
        \centering
        $\arraycolsep=2pt\def\arraystretch{1}
\sigma : \left\{\begin{array}{ccc}
    0 & \mapsto & 01  \\
    1 & \mapsto & 0
\end{array}\right.$

        \vspace{0.4\textwidth}
        \caption{$\sigma$}
    \end{subfigure}
    \hfill    
    \begin{subfigure}[b]{0.2\textwidth}
         \centering
         \includegraphics[width = \textwidth]{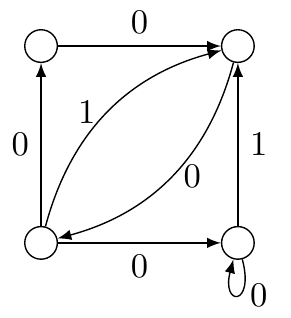}
         \caption{$\mathfrak{A}$}
     \end{subfigure}
     \hfill
    \begin{subfigure}[b]{0.25\textwidth}
         \centering
         \includegraphics[width = 0.8\textwidth]{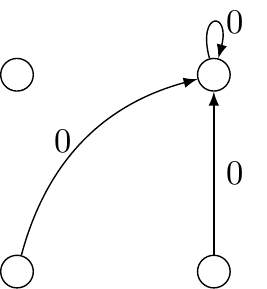}
         \caption{Intermediate step}
     \end{subfigure}
     \hfill
    \begin{subfigure}[b]{0.2\textwidth}
         \centering
         \includegraphics[width = \textwidth]{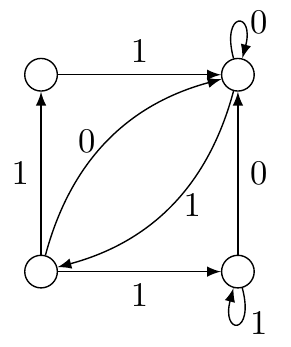}
         \caption{$\sigma^{-1}(\mathfrak{A})$}
     \end{subfigure}
    
    \caption{Desubstitution of the $\omega$-automaton $\mathfrak{A}$ by $\sigma$}\label{fig:desubst}
\end{figure}

We build the $\omega$-automaton $\sigma^{-1}(\mathfrak{A})$.
Start from an empty automaton on the same set of states. 
For every computation in $\mathfrak{A}$ labeled by $01 = \sigma(0)$ --- say, $\trans{q}{\sigma(0)}{r}$ --- add an edge $q\xrightarrow[]{0} r$ to the automaton (Figure~\ref{fig:desubst}(c)).
To conclude, do this with $\sigma(1) = 0$ (Figure~\ref{fig:desubst}(d)).

Stability by inverse morphism is a classical concept in the theory of finite automata \cite{hopcroft_ullman_1979}, and desubstitution satisfies the following property:

\begin{proposition}\label{prop:DesubAuto}
An infinite word $u$ is accepted by $\sigma^{-1}(\mathfrak{A})$ if and only if $\sigma(u)$ is accepted by $\mathfrak{A}$. In other words, $\mathcal L_\infty(\sigma^{-1}(\mathfrak{A})) = \sigma^{-1}(\mathcal L_\infty(\mathfrak{A}))$.
\end{proposition}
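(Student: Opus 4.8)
The plan is to prove the two inclusions by converting accepting computations of one automaton into accepting computations of the other, since acceptance of an infinite word is, by our convention, nothing more than the existence of an infinite walk starting from an initial state. The whole content of the statement is that the transition relation $T'$ of $\sigma^{-1}(\mathfrak A)$ was designed precisely to encode, in a single edge $q_1 \xrightarrow{a} q_2$, the existence of a finite computation $\trans{q_1}{\sigma(a)}{q_2}$ in $\mathfrak A$, so the argument is a matching of walks on both sides.

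For the inclusion $\mathcal L_\infty(\sigma^{-1}(\mathfrak A)) \subseteq \sigma^{-1}(\mathcal L_\infty(\mathfrak A))$, I would take an accepting walk $q_0 \xrightarrow{u_0} q_1 \xrightarrow{u_1} q_2 \cdots$ in $\sigma^{-1}(\mathfrak A)$ with $q_0 \in I$. By Definition~\ref{def:DesubAuto}, each edge $q_i \xrightarrow{u_i} q_{i+1}$ of $T'$ comes with a finite computation $\trans{q_i}{\sigma(u_i)}{q_{i+1}}$ in $\mathfrak A$. Concatenating these finite computations in order produces a computation of $\mathfrak A$ that starts at $q_0 \in I$ and is labeled by $\sigma(u_0)\sigma(u_1)\sigma(u_2)\cdots = \sigma(u)$, witnessing that $\sigma(u)$ is accepted.

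For the reverse inclusion I would perform the symmetric decomposition: starting from an accepting computation of $\mathfrak A$ reading $\sigma(u)$, I mark the states it visits at the block boundaries. Setting $s_i = |\sigma(u_0)\cdots\sigma(u_{i-1})|$ and letting $p_i$ be the state reached after reading the first $s_i$ letters (so $p_0$ is the initial state), the segment of the computation from $p_i$ to $p_{i+1}$ is exactly labeled by $\sigma(u_i)$; hence $\trans{p_i}{\sigma(u_i)}{p_{i+1}}$ and so $p_i \xrightarrow{u_i} p_{i+1} \in T'$. This makes $p_0 \xrightarrow{u_0} p_1 \xrightarrow{u_1} \cdots$ an accepting walk of $\sigma^{-1}(\mathfrak A)$ on $u$.

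The one point that is not purely bookkeeping, and where I expect the real care to be needed, is maintaining infiniteness through these conversions in the possibly-erasing setting. When $\sigma$ is non-erasing each $|\sigma(u_i)| \geq 1$, so $\sigma(u)$ is genuinely infinite, the boundaries $s_i$ strictly increase, and both walks are honestly infinite. When $\sigma$ may erase, a block $\sigma(u_i) = \varepsilon$ produces $p_i = p_{i+1}$, which is consistent with the convention that $q \xrightarrow{\varepsilon} q$ is a computation and hence a loop in $\sigma^{-1}(\mathfrak A)$; the argument then has to confirm that the constructed walk still comprises infinitely many transitions and that its label really is the infinite word $\sigma(u)$, so that the equivalence genuinely relates infinite words on both sides. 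Checking this alignment between the $\varepsilon$-loop convention and infiniteness is the delicate step; the concatenation and segmentation themselves are routine once it is in place.
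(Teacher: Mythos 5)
Your proof is correct and takes essentially the same route as the paper's: the forward direction concatenates the finite computations witnessing each transition of $\sigma^{-1}(\mathfrak{A})$, and the converse cuts an accepting computation of $\mathfrak{A}$ at the block boundaries $|\sigma(u_0 \cdots u_{i-1})|$, which is exactly the paper's argument.

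On the point you flag as delicate: your caution is warranted, and in fact that step cannot be completed in general. For the converse direction, erasing causes no harm, just as you say: an empty block gives $p_i = p_{i+1}$, matched by the $\varepsilon$-loop, and the constructed walk still has one transition per letter of $u$, so it is infinite and labeled by $u$. But in the forward direction, if only finitely many blocks $\sigma(u_i)$ are nonempty, the concatenated computation is labeled by a \emph{finite} word; it witnesses $\sigma(u) \in \mathcal{L}_F(\mathfrak{A})$, not $\sigma(u) \in \mathcal{L}_\infty(\mathfrak{A})$. Concretely, take $\mathcal{A} = \{a\}$, $\sigma(a) = \varepsilon$, and $\mathfrak{A}$ a single initial state with no transitions: then $a^\omega \in \mathcal{L}_\infty(\sigma^{-1}(\mathfrak{A}))$ because of the $\varepsilon$-loop, while $\sigma^{-1}(\mathcal{L}_\infty(\mathfrak{A})) = \emptyset$. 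So the displayed equality $\mathcal{L}_\infty(\sigma^{-1}(\mathfrak{A})) = \sigma^{-1}(\mathcal{L}_\infty(\mathfrak{A}))$ genuinely requires that $\sigma(u)$ be infinite --- in particular it holds when $\sigma$ is a substitution, the case the paper relies on afterwards --- or else ``accepted'' must be read as allowing finite words, as in the first sentence of the statement. The paper's own proof passes over this silently (``we get an infinite computation''), so identifying this as the delicate point is an improvement on the paper's write-up rather than a gap in yours.
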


\begin{proof}
Let $u$ be accepted by $\sigma^{-1}(\mathfrak{A})$.
Consider the associated accepting walk $(q_i)_{i \in \mathbb{N}}$.
By definition of $\sigma^{-1}(\mathfrak{A})$, for every $i \in \mathbb{N}$, there exists a computation $\trans{q_i}{\sigma(u_i)}{q_{i+1}}$ in $\mathfrak{A}$.
By concatenating these computations, we get an infinite computation $\trans{q_0}{\sigma(u_0)}{q_1} \trans{}{\sigma(u_1)}{q_2} \trans{}{\sigma(u_2)}{\cdots} $ in $\mathfrak{A}$ that accepts $\sigma(u)$ in $\mathfrak{A}$.

Conversely, suppose there is a word of the form $\sigma(u)$ accepted by $\mathfrak{A}$.
Consider the states $(q_i)_{i \in \mathbb{N}}$ obtained after reading each $\sigma(a)$ for $a \in \mathcal{A}$.
This defines an accepting computation labeled by $u$ in $\sigma^{-1}(\mathfrak{A})$.
\end{proof}

This proof actuallly provides a similar result for finite words:

\begin{proposition}
\label{WordSub}
    Let $w$ be a finite word, $\mathfrak{A}$ an $\omega$-automaton and $\sigma$ a homomorphism.
    Then $\trans{q_s}{\sigma(w)}{q_t}$ is a computation in $\mathfrak{A}$ iff $\trans{q_s}{w}{q_t}$ is a computation in $\sigma^{-1}(\mathfrak{A})$.
\end{proposition}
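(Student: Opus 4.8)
The plan is to prove Proposition~\ref{WordSub} by essentially specializing the argument of Proposition~\ref{prop:DesubAuto} to finite walks, and in fact the cleanest route is to run an induction on the length $|w|$ of the finite word, which lets us reuse the definition of $\sigma^{-1}(\mathfrak{A})$ letter by letter. The base case $w = \varepsilon$ is immediate: $\sigma(\varepsilon) = \varepsilon$, and the statement $\trans{q_s}{\varepsilon}{q_t}$ holds (as a computation of length $0$) iff $q_s = q_t$ in either automaton, since both share the same state set $Q$.

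For the inductive step, I would write $w = w' a$ with $a \in \mathcal{A}$ and $w' \in \mathcal{A}^\ast$. In the forward direction, suppose $\trans{q_s}{\sigma(w)}{q_t}$ is a computation in $\mathfrak{A}$. Since $\sigma$ is a homomorphism, $\sigma(w) = \sigma(w')\sigma(a)$, so I can split this computation at the state $q_m$ reached after reading the prefix $\sigma(w')$, giving $\trans{q_s}{\sigma(w')}{q_m}$ and $\trans{q_m}{\sigma(a)}{q_t}$ in $\mathfrak{A}$. By the induction hypothesis applied to $w'$, the first part yields $\trans{q_s}{w'}{q_m}$ in $\sigma^{-1}(\mathfrak{A})$; by Definition~\ref{def:DesubAuto}, the computation $\trans{q_m}{\sigma(a)}{q_t}$ in $\mathfrak{A}$ gives precisely the single transition $q_m \xrightarrow{a} q_t$ in $\sigma^{-1}(\mathfrak{A})$. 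Concatenating these produces $\trans{q_s}{w}{q_t}$ in $\sigma^{-1}(\mathfrak{A})$. The converse direction is symmetric: a computation $\trans{q_s}{w}{q_t}$ in $\sigma^{-1}(\mathfrak{A})$ splits as $\trans{q_s}{w'}{q_m}$ followed by the transition $q_m \xrightarrow{a} q_t$, and unfolding each via the induction hypothesis and Definition~\ref{def:DesubAuto} reassembles a computation labeled $\sigma(w)$ in $\mathfrak{A}$.

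The main subtlety to handle carefully is the erasing case, that is when $\sigma(a) = \varepsilon$ for some letter $a$. There the single transition $q_m \xrightarrow{a} q_t$ in $\sigma^{-1}(\mathfrak{A})$ corresponds to the empty computation $\trans{q_m}{\varepsilon}{q_t}$ in $\mathfrak{A}$, which under the paper's convention (a state has an $\varepsilon$-loop to itself) forces $q_m = q_t$; this is consistent with the loop labeled $a$ that Definition~\ref{def:DesubAuto} places on every state. As long as the induction is set up to allow $\sigma(a) = \varepsilon$ at the inductive step, no separate case analysis is needed, but one should state explicitly that the splitting of $\sigma(w)$ at $q_m$ remains valid even when the appended block $\sigma(a)$ is empty. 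I expect this bookkeeping around the empty-block convention to be the only place requiring genuine care; everything else is a routine concatenation-of-computations argument that mirrors the proof of Proposition~\ref{prop:DesubAuto}. Indeed, one may alternatively dispense with the induction entirely and observe that the proof of Proposition~\ref{prop:DesubAuto} already constructs the correspondence between walks step by step and never uses infiniteness, so truncating it to a finite prefix gives the claim directly.
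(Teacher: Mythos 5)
Your proof is correct and takes essentially the same approach as the paper: the paper gives no standalone argument for Proposition~\ref{WordSub}, merely remarking that the block-splitting correspondence built in the proof of Proposition~\ref{prop:DesubAuto} works verbatim for finite words, which is exactly the observation you make in your closing sentence. Your induction on $|w|$ is just a careful formalization of that same concatenation/splitting argument, with the erasing convention handled correctly.
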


An easy but significant property is the composition of desubstitution of $\omega$-automata:

\begin{proposition}\label{prop:DesubComp}
Let $\mathfrak{A}$ be an $\omega$-automaton, and $\sigma$ and $\tau$ be two homomorphisms.
Then, $(\sigma \circ \tau)^{-1}(\mathfrak{A}) = \tau^{-1} (\sigma^{-1}(\mathfrak{A}))$.
\end{proposition}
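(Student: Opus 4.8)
The plan is to prove the identity $(\sigma \circ \tau)^{-1}(\mathfrak{A}) = \tau^{-1}(\sigma^{-1}(\mathfrak{A}))$ by showing that these two $\omega$-automata are \emph{literally equal}, not merely equivalent: they share the same alphabet $\mathcal{A}$, the same state set $Q$, and the same initial states $I$ by construction (Definition~\ref{def:DesubAuto} never alters any of these), so the only thing to check is that their transition sets coincide. Thus I would fix arbitrary states $q_1, q_2 \in Q$ and a letter $a \in \mathcal{A}$, and prove the biconditional
\[
q_1 \xrightarrow{a} q_2 \in T_{(\sigma\circ\tau)^{-1}(\mathfrak{A})} \iff q_1 \xrightarrow{a} q_2 \in T_{\tau^{-1}(\sigma^{-1}(\mathfrak{A}))}.
\]

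For the forward unfolding I would apply the definitions in sequence. By Definition~\ref{def:DesubAuto} applied to the homomorphism $\sigma \circ \tau$, the left-hand transition exists if and only if $\trans{q_1}{(\sigma\circ\tau)(a)}{q_2}$ is a computation in $\mathfrak{A}$, i.e. a computation labeled by $\sigma(\tau(a))$. On the other side, applying Definition~\ref{def:DesubAuto} to $\tau$ on the automaton $\sigma^{-1}(\mathfrak{A})$, the right-hand transition exists if and only if $\trans{q_1}{\tau(a)}{q_2}$ is a computation in $\sigma^{-1}(\mathfrak{A})$. The crux is therefore to show that
\[
\trans{q_1}{\tau(a)}{q_2} \text{ is a computation in } \sigma^{-1}(\mathfrak{A}) \iff \trans{q_1}{\sigma(\tau(a))}{q_2} \text{ is a computation in } \mathfrak{A}.
\]
This is exactly Proposition~\ref{WordSub} (the finite-word version of desubstitution), applied to the finite word $w = \tau(a)$ and the homomorphism $\sigma$: reading $w$ in $\sigma^{-1}(\mathfrak{A})$ between two states is equivalent to reading $\sigma(w)$ in $\mathfrak{A}$ between the same two states. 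Since $\sigma(\tau(a)) = (\sigma\circ\tau)(a)$, the two characterizations match and the transition sets are equal.

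I expect the only genuine subtlety to be bookkeeping about the empty word in the possibly-erasing case, so I would keep the argument clean by invoking Proposition~\ref{WordSub} rather than re-deriving it: that proposition already accommodates the convention that $\trans{q}{\varepsilon}{q}$ is a computation, which is precisely where an erasing $\tau$ or $\sigma$ could otherwise cause trouble. The main ``obstacle'' is thus not conceptual but a matter of correctly aligning the two different applications of Definition~\ref{def:DesubAuto} — one to the composite $\sigma\circ\tau$ and one nested pair to $\tau$ then $\sigma$ — and recognizing that Proposition~\ref{WordSub} is exactly the bridge between reading a factor in a desubstituted automaton and reading its image in the original. Once the transition-set equality is established for arbitrary $q_1, q_2, a$, the equality of the full tuples follows immediately and the proof is complete.
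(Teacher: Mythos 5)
Your proposal is correct and follows essentially the same route as the paper's own proof: reduce the automaton equality to equality of transition sets (states, initial states, and alphabet being unchanged by Definition~\ref{def:DesubAuto}), and bridge the two characterizations via Proposition~\ref{WordSub} applied to the word $\tau(a)$ and the homomorphism $\sigma$. The only cosmetic difference is that the paper phrases the argument as a single chain of three equivalences, invoking Proposition~\ref{WordSub} for the last step as well, where you use the definition directly; this is immaterial since for a single letter that instance of Proposition~\ref{WordSub} is the definition.
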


\begin{proof}
These two $\omega$-automata share the same sets of states and of initial states.
We prove that they have the same transitions.
We have indeed:
\begin{align*}
        q_s \xrightarrow[]{a} q_t \text{ in } (\sigma \circ \tau)^{-1}(\mathfrak{A}) &\iff \trans{q_s}{\sigma \circ \tau (a)}{q_t} \text{ in } \mathfrak{A} \\
        &\iff \trans{q_s}{\tau(a)}{q_t}  \text{ in } \sigma^{-1}(\mathfrak{A}) \text{, by Proposition \ref{WordSub} } \\
        &\iff q_s \xrightarrow[]{a} q_t \text{ in  }\tau^{-1}(\sigma^{-1}(\mathfrak{A})) \text{, by Proposition \ref{WordSub} again.}
\end{align*}

\end{proof}

\subsection{The problem of the purely substitutive walk}

We underline the following property of desubstitutions of $\omega$-automata:

\begin{fact}\label{fct:DesubFinite}
Let $\mathfrak{A}$ be an $\omega$-automaton,
let $\mathfrak{S}(\mathfrak{A})$ be the set of all $\omega$-automata which have the same alphabet, the same set of states and the same initial states as $\mathfrak{A}$. 
For any homomorphism $\sigma$ on $\mathcal{A}$, $\sigma^{-1}(\mathfrak{A})$ is an element of $\mathfrak{S}(\mathfrak{A})$.
\end{fact}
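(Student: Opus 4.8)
The plan is to read the claim directly off Definition~\ref{def:DesubAuto}, since desubstitution is constructed precisely so as to preserve every component of the tuple except the transition relation. First I would recall that $\mathfrak{A} = (\mathcal{A}, Q, I, T)$, and that by definition $\sigma^{-1}(\mathfrak{A}) = (\mathcal{A}, Q, I, T')$, where $T' \subseteq Q \times \mathcal{A} \times Q$ is the transition set determined by the rule that $q_1 \xrightarrow[]{a} q_2 \in T'$ iff $\trans{q_1}{\sigma(a)}{q_2}$ is a computation in $\mathfrak{A}$. Comparing the two tuples, the alphabet, the state set and the set of initial states are literally unchanged; only the transitions may differ. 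Hence $\sigma^{-1}(\mathfrak{A})$ satisfies the three defining conditions for membership in $\mathfrak{S}(\mathfrak{A})$, which is exactly the assertion.

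There is no genuine obstacle here: the statement is immediate from the construction, and I would phrase it as a one-line consequence of the definition. The point worth emphasising, and presumably the reason the fact is isolated, is the finiteness consequence. Since $Q$ and $\mathcal{A}$ are finite, the collection $\mathfrak{S}(\mathfrak{A})$ of $\omega$-automata sharing the triple $(\mathcal{A}, Q, I)$ is itself finite: its elements are in bijection with the subsets $T' \subseteq Q \times \mathcal{A} \times Q$, so that $|\mathfrak{S}(\mathfrak{A})| = 2^{|Q|^2 |\mathcal{A}|}$. Combined with Proposition~\ref{prop:DesubComp}, this shows that iterating desubstitutions along any sequence of homomorphisms can only ever produce finitely many distinct $\omega$-automata. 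I would therefore append this finiteness of $\mathfrak{S}(\mathfrak{A})$ as an explicit remark, since it is what lets the later meta-$\omega$-automaton take $\mathfrak{S}(\mathfrak{A})$ as its finite vertex set and thereby be well-defined.
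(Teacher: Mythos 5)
Your proof is correct and matches the paper's (implicit) justification exactly: the claim is read directly off Definition~\ref{def:DesubAuto}, since $\sigma^{-1}(\mathfrak{A}) = (\mathcal{A}, Q, I, T')$ differs from $\mathfrak{A}$ only in its transition set. Your appended remark on the finiteness of $\mathfrak{S}(\mathfrak{A})$, with the count $2^{|Q|^2 |\mathcal{A}|}$, is precisely the observation the paper makes immediately after stating the Fact.
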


The crucial point is that $\mathfrak{S}(\mathfrak{A})$ is finite: given $\mathfrak{A} = (\mathcal{A}, Q,I,T)$, an element of $\mathfrak{S}(\mathfrak{A})$ is identified by its transitions, which form a subset of $(Q \times \mathcal{A} \times Q)$, so $\text{Card}(\mathfrak{S}(\mathfrak{A})) = 2^{|Q|^2 \times |\mathcal{A}|}$.
We could work on a subset of $\mathfrak{S}(\mathfrak{A})$ by identifying $\omega$-automata with the same language \cite{bassino_david_sportiello_2011}, but finiteness is sufficient for our results.

Given $\mathfrak{A}$ an $\omega$-automaton and $\sigma$ a homomorphism, $\sigma^{-1}$ defines a dynamic on the finite set $\mathfrak{S}(\mathfrak{A})$. 
By the pigeonhole principle:

\begin{fact}\label{fct:LoopAutomata}
Let $\mathfrak{A}$ be an $\omega$-automaton, and $\sigma$ be a homomorphism.
Then there exist $n < m  \leq |\mathfrak S(\mathfrak A)|+1$ such that $\sigma^{-n}(\mathfrak{A}) = \sigma^{-m}(\mathfrak{A})$.
\end{fact}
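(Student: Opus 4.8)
The plan is to apply the pigeonhole principle to the sequence of iterated desubstitutions $\mathfrak{A}, \sigma^{-1}(\mathfrak{A}), \sigma^{-2}(\mathfrak{A}), \ldots$, where I read $\sigma^{-k}(\mathfrak{A})$ as $(\sigma^k)^{-1}(\mathfrak{A})$; by Proposition~\ref{prop:DesubComp} this coincides with applying the single-step operation $\sigma^{-1}$ exactly $k$ times. First I would check that desubstitution by $\sigma$ acts as an endomap on the finite set $\mathfrak{S}(\mathfrak{A})$. Fact~\ref{fct:DesubFinite} gives $\sigma^{-1}(\mathfrak{A}) \in \mathfrak{S}(\mathfrak{A})$, and since every automaton in $\mathfrak{S}(\mathfrak{A})$ shares the same alphabet, state set, and initial states as $\mathfrak{A}$, the same fact applied to an arbitrary $\mathfrak{B} \in \mathfrak{S}(\mathfrak{A})$ shows $\sigma^{-1}(\mathfrak{B}) \in \mathfrak{S}(\mathfrak{B}) = \mathfrak{S}(\mathfrak{A})$. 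Hence the whole orbit $(\sigma^{-k}(\mathfrak{A}))_{k \geq 0}$ remains inside $\mathfrak{S}(\mathfrak{A})$.

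Next I would consider the first $|\mathfrak{S}(\mathfrak{A})| + 1$ terms of this orbit, namely $\sigma^{-k}(\mathfrak{A})$ for $0 \leq k \leq |\mathfrak{S}(\mathfrak{A})|$. These are $|\mathfrak{S}(\mathfrak{A})| + 1$ automata all drawn from a set of cardinality $|\mathfrak{S}(\mathfrak{A})|$, so two of them must coincide: there exist indices $n < m$ with $0 \leq n < m \leq |\mathfrak{S}(\mathfrak{A})|$ and $\sigma^{-n}(\mathfrak{A}) = \sigma^{-m}(\mathfrak{A})$. This yields the claimed bound, and a fortiori $m \leq |\mathfrak{S}(\mathfrak{A})| + 1$.

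There is essentially no serious obstacle here; the statement is a direct pigeonhole argument once finiteness is available, and the finiteness of $\mathfrak{S}(\mathfrak{A})$ is already established (its cardinality being $2^{|Q|^2 \times |\mathcal{A}|}$). The only point requiring a moment of care is confirming that $\sigma^{-1}$ genuinely maps $\mathfrak{S}(\mathfrak{A})$ \emph{into itself}, rather than merely sending $\mathfrak{A}$ into that set, so that the entire orbit is confined to a finite reservoir and the pigeonhole reasoning applies to all iterates simultaneously. Once that endomap property is in hand, the conclusion is immediate.
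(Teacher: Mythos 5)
Your proof is correct and matches the paper's own argument: the paper likewise observes that $\sigma^{-1}$ defines a dynamic on the finite set $\mathfrak{S}(\mathfrak{A})$ and invokes the pigeonhole principle. Your care in verifying that $\sigma^{-1}$ maps $\mathfrak{S}(\mathfrak{A})$ into itself (and in reconciling $(\sigma^k)^{-1}$ with $k$-fold application of $\sigma^{-1}$ via Proposition~\ref{prop:DesubComp}) only makes explicit what the paper leaves implicit, and even yields the marginally sharper bound $m \leq |\mathfrak{S}(\mathfrak{A})|$.
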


In the remainder of the section, we prove that, given an $\omega$-automaton $\mathfrak{A}$ and a substitution $\sigma$, the problems of finding a fixed point of $\sigma$ or a purely substitutive word generated by $\sigma$ accepted by $\mathfrak{A}$ are decidable.

A purely substitutive word generated by an erasing homomorphism $\sigma$ is also generated by a non-erasing homomorphism $\tau$ (that is, a substitution) that can be effectively constructed: remove every erased letter from $\mathcal{A}$ and from the images of $\sigma$, and repeat the process.
Thus, we assume $\sigma$ itself is a substitution.

\begin{proposition}\label{prop:FixedPointPower}
Let $\mathfrak{A}$ be an $\omega$-automaton, let $\sigma$ be a substitution and let $n < m \leq |\mathfrak S(\mathfrak A)|+1$ such that $\sigma^{-n}(\mathfrak{A}) = \sigma^{-m}(\mathfrak{A})$.
Then, $\mathfrak{A}$ accepts a fixed point for $\sigma^k$ for some $k \geq 1$ iff  $\mathcal{L}_\infty(\sigma^{-n}(\mathfrak{A}))$ is nonempty.
\end{proposition}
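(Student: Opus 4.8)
Write $p = m-n \ge 1$. By Proposition~\ref{prop:DesubComp} we have $\sigma^{-m}(\mathfrak A) = (\sigma^p)^{-1}(\sigma^{-n}(\mathfrak A))$ and $\sigma^{-n}(\mathfrak A) = (\sigma^n)^{-1}(\mathfrak A)$, so, setting $\mathfrak B = \sigma^{-n}(\mathfrak A)$, the hypothesis reads $(\sigma^p)^{-1}(\mathfrak B) = \mathfrak B$. Writing $L = \mathcal L_\infty(\mathfrak B)$, Proposition~\ref{prop:DesubAuto} then yields two facts I will use repeatedly: first, $u \in L \iff \sigma^n(u) \in \mathcal L_\infty(\mathfrak A)$; and second, $L$ is invariant under $\sigma^p$ in the strong sense that $u \in L \iff \sigma^p(u) \in L$, hence also under every power $\sigma^{pj}$. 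The overall plan is to prove each implication by transporting fixed points between $\mathfrak A$ and $\mathfrak B$ along $\sigma^n$.

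For the direct implication, suppose $x \in \mathcal L_\infty(\mathfrak A)$ satisfies $\sigma^k(x) = x$. Then $\sigma^{jk}(x) = x$ for every $j$, so choosing $j$ with $jk \ge n$ and setting $u = \sigma^{jk-n}(x)$ (an infinite word, since $\sigma$ is non-erasing) we get $\sigma^n(u) = x \in \mathcal L_\infty(\mathfrak A)$, whence $u \in L$ and $L \ne \emptyset$.

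The converse is the substantial direction. Assuming $L \ne \emptyset$, I want to exhibit a fixed point of some power $\sigma^{pj}$ inside $L$; applying $\sigma^n$ to it then gives, by the first fact above together with the commutation $\sigma^{pj}\circ\sigma^n = \sigma^n\circ\sigma^{pj}$, a word of $\mathcal L_\infty(\mathfrak A)$ fixed by $\sigma^{pj}$. To build such a fixed point I use the first-letter map $a \mapsto (\sigma^p(a))_0$, well-defined because $\sigma$ is non-erasing. Starting from an accepting walk out of an initial state $q_0 \in I$ and iterating this map while keeping the state $q_0$ fixed, the finiteness of $\mathcal A$ produces a letter $b$ and an integer $j \ge 1$ with $(\sigma^{pj}(b))_0 = b$, together with an accepting walk from $q_0$ whose label $y$ begins with $b$; the point is that strong $\sigma^p$-invariance lets each step of the map be realized by a genuine computation from $q_0$, so the accepting walk and the fact that $q_0$ is initial are preserved throughout. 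Writing $\tau = \sigma^{pj}$, when $|\tau(b)| \ge 2$ the nested prefixes $\tau^\ell(b)$ converge to an infinite fixed point $z = \lim_\ell \tau^\ell(b)$ of $\tau$; since $\tau^\ell(y) \in L$ for all $\ell$ and $\tau^\ell(y)$ shares the growing prefix $\tau^\ell(b)$ with $z$, we have $\tau^\ell(y) \to z$, and $z \in L$ because $L$ is closed.

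The main obstacle is the degenerate case $\tau(b) = b$, where the construction collapses to the single letter $b$ and no growth occurs; here nondeterminism also matters, since an arbitrary accepting walk need not be relabelable by $b^\infty$. I will handle this by a greedy construction of the walk: fixing once and for all a uniform power $\tau = \sigma^{pJ}$ for which the first-letter map $h(a) = (\tau(a))_0$ is idempotent (so $h$ sends every letter to a letter fixed by $h$), I extend an accepting walk from $q_0$ one edge at a time, at each visited state replacing the next letter $a$ of some accepting continuation by $h(a)$, which still begins an accepting continuation from the same state by $\sigma^p$-invariance. This yields an accepted word all of whose letters $b$ satisfy $(\tau(b))_0 = b$. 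Either every such letter is genuinely fixed by $\tau$, and then the whole word is a fixed point of $\tau$ in $L$; or, at the first position where $|\tau(b)| \ge 2$, the previous paragraph applies to the corresponding suffix (read from the state reached there, whose tail language is again $\sigma^p$-invariant and nonempty), while the frozen prefix is harmlessly prepended since $\tau$ fixes it letter by letter. In every case $L$ contains a fixed point of $\tau$, which completes the converse.
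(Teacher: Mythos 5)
Your proof is correct, and its key mechanism is genuinely different from the paper's. Both arguments share the same skeleton: set $\mathfrak B=\sigma^{-n}(\mathfrak A)$, use $(\sigma^{m-n})^{-1}(\mathfrak B)=\mathfrak B$, produce inside the compact language $\mathcal L_\infty(\mathfrak B)$ a word whose prefix is preserved under some power of $\sigma$, and conclude by a limit-plus-closedness step. The difference is how that witness word is obtained. The paper takes an arbitrary $x\in\mathcal L_\infty(\mathfrak B)$, pushes it forward by $\sigma^{n+j(m-n)}$, and extracts an adherence value $\tilde x$; it must then derive from the adherence-value property both the periodicity of the ``quiet'' prefix letters (the integer $r$) and the prolongability of the first growing letter (the integer $q$) --- the tersest steps of the published proof. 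You instead construct the witness directly: you pass to a power $\tau=\sigma^{pJ}$ whose first-letter map is idempotent (pigeonhole for a self-map of a finite set), and exploit the automaton-level identity $(\sigma^p)^{-1}(\mathfrak B)=\mathfrak B$ --- i.e.\ the state-wise correspondence of Proposition~\ref{WordSub}, not merely the language-level invariance, which alone would not justify your greedy step --- to build an accepted word all of whose letters are fixed by the first-letter map; your dichotomy (every letter genuinely fixed by $\tau$, or a first growing letter that is right-prolongable, with the frozen prefix harmlessly prepended) then plays the role of the paper's quiet-letter analysis, and the endgame (convergence of $\tau^\ell$-images inside a closed tail language) coincides with the paper's compactness argument. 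What your route buys is the complete elimination of adherence values: the periodicity that the paper asserts for the letters of $\tilde x$ is replaced by the upfront choice of an idempotent power, which is arguably more elementary and easier to verify; the cost is a longer explicit construction and the need for state-wise (not just language-wise) invariance. Your forward implication, transporting the fixed point through $u=\sigma^{jk-n}(x)$, is a clean direct version of the paper's contrapositive and is equally valid.
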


\begin{proof}
If $\mathcal{L}_\infty(\sigma^{-n}(\mathfrak{A}))$ is empty, then, by Propositions~\ref{prop:DesubAuto} and \ref{prop:DesubComp}, $\mathcal{L}_\infty(\sigma^{-p}(\mathfrak{A}))$ is empty for every $p \geq n$.
Let $k \geq 1$: if there were a fixed point $x$ for $\sigma^k$ accepted by $\mathfrak{A}$, we would have $x = \sigma^k(x) = \sigma^{kn}(x)$ by iterating.
So $x$ would be in $\mathcal{L}_\infty(\sigma^{-kn}(\mathfrak{A}))$ which is empty. By contradiction, there is no fixed point for any $\sigma^k$.

If $\mathcal{L}_\infty(\sigma^{-n}(\mathfrak{A}))$ is nonempty, let $x$ be a word accepted by $\sigma^{-n}(\mathfrak{A})$.
Again by Propositions~\ref{prop:DesubAuto} and \ref{prop:DesubComp}, because $\sigma^{-n}(\mathfrak{A}) = \sigma^{-m}(\mathfrak{A}) = \sigma^{-(m-n)}(\sigma^{-n}(\mathfrak{A}))$, $x$ is accepted by $\sigma^{-j(m-n)}(\sigma^{-n}(\mathfrak{A})) = \sigma^{-(n + j(m-n))}(\mathfrak{A})$ for all $j \in \mathbb{N}$.
This means that $\sigma^{n+j(m-n)}(x)$ is accepted by $\mathfrak{A}$ for all $j \in \mathbb{N}$.
Consider an adherence value $\Tilde{x}$ of the sequence $(\sigma^{n+j(m-n)}(x))_{j \in \mathbb{N}}$.
By compactness of the language of an $\omega$-automaton, $\Tilde{x} \in \mathcal{L}_{\infty}(\sigma^{-n}(\mathfrak{A}))$.

We define $Q_{\sigma^{m-n}} \subseteq \mathcal{A}$ the set of quiet letters for $\sigma^{m-n}$: $a \in Q_{\sigma^{m-n}}$ if $|\sigma^{j(m-n)}(a)| = 1$ for all $j \geq 1$. 
Let $k = \inf \{i \in \mathbb{N}\ |\ \Tilde{x}_i \notin Q_{\sigma^{m-n}}\}$ ($k$ may be infinite).
Then, for $i < k$, because $\sigma$ is a (nonerasing) substitution and every letter in $\Tilde{x}_{\llbracket 0,k-1\rrbracket}$ is quiet, $\sigma^{(m-n)}(\Tilde{x})_i = \sigma^{(m-n)}(\Tilde{x}_i)$.
In addition, because $\Tilde{x}$ is an adherence value of $(\sigma^{n+j(m-n)}(x))_{j \in \mathbb{N}}$, there is $r(\Tilde{x}_i) \geq 1$ such that $\sigma^{r(\Tilde{x}_i)\cdot(m-n)}(\Tilde{x}_i) = \Tilde{x}_i$ for every position $i < k$. 
Since $\mathcal A$ is finite, $(r(\Tilde{x}_i))_{0\leq i<k}$ contains only finitely many values, so we can define $r = \text{lcm} \{r(\Tilde{x}_i)\}$.

When $k<\infty$, there exists $q \geq 1$ such that $|\sigma^{q(m-n)}(\Tilde{x}_k)| > 1$ and $\Tilde{x}_k \sqsubseteq_p \sigma^{q(m-n)}(\Tilde{x}_k)$, for the same reason that $\Tilde{x}$ is an adherence value of $(\sigma^{n+j(m-n)}(x))_{j \in \mathbb{N}}$.
If $k = \infty$, we set $q = 1$.

Then, by concatenation, $\Tilde{x}_{\llbracket 0,k \rrbracket} \sqsubseteq_p \sigma^{rq(m-n)}(\Tilde{x}_{\llbracket 0,k \rrbracket})$.
Thus, $(\sigma^{jrq(m-n)}(\Tilde{x}))_{ j \in \mathbb{N}}$ has a limit, which is a fixed point for $\sigma^{rq(m-n)}$, and by compactness of $\mathcal{L}_{\infty}(\mathfrak{A})$, is accepted by $\mathfrak{A}$.
\end{proof}

Because the emptiness of the language of an $\omega$-automaton is decidable:

\begin{corollary}\label{cor:FixedPointPower}
The following problem is decidable:
\problem{An $\omega$-automaton $\mathfrak A$ and a substitution $\sigma$}
{Does $\mathfrak A$ accept a fixed point of $\sigma^k$ for some $k$?}
\end{corollary}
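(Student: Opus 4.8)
The plan is to leverage Proposition~\ref{prop:FixedPointPower}, which already reduces the question to an emptiness test on a single $\omega$-automaton; the only work left is to check that every ingredient is effective and to assemble them. First I would observe that the desubstitution operation itself is computable: given $\mathfrak{A}$ and $\sigma$, constructing $\sigma^{-1}(\mathfrak{A})$ (Definition~\ref{def:DesubAuto}) amounts, for each triple $(q_1,a,q_2)\in Q\times\mathcal{A}\times Q$, to deciding whether $\trans{q_1}{\sigma(a)}{q_2}$ is a computation in $\mathfrak{A}$. Since $\sigma(a)$ is a fixed finite word, this is a finite reachability check in the transition graph of $\mathfrak{A}$. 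Iterating, each $\sigma^{-k}(\mathfrak{A})$ is computable from $\sigma^{-(k-1)}(\mathfrak{A})$ using Proposition~\ref{prop:DesubComp}.

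Next I would locate the loop promised by Fact~\ref{fct:LoopAutomata}. By Fact~\ref{fct:DesubFinite} the set $\mathfrak{S}(\mathfrak{A})$ is finite, with $\mathrm{Card}(\mathfrak{S}(\mathfrak{A})) = 2^{|Q|^2\times|\mathcal{A}|}$, so I can compute the finite sequence $\sigma^{-0}(\mathfrak{A}),\sigma^{-1}(\mathfrak{A}),\dots$ and halt as soon as some value repeats. The pigeonhole principle guarantees a repetition within $|\mathfrak{S}(\mathfrak{A})|+1$ steps, yielding explicit integers $n<m$ with $\sigma^{-n}(\mathfrak{A})=\sigma^{-m}(\mathfrak{A})$ together with the automaton $\sigma^{-n}(\mathfrak{A})$ itself.

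Finally, by Proposition~\ref{prop:FixedPointPower}, $\mathfrak{A}$ accepts a fixed point of $\sigma^k$ for some $k\geq 1$ if and only if $\mathcal{L}_\infty(\sigma^{-n}(\mathfrak{A}))$ is nonempty, so it suffices to decide this emptiness. In our acceptance model (every infinite walk beginning at an initial state is accepting), $\mathcal{L}_\infty(\mathfrak{B})$ is nonempty exactly when there is an infinite walk starting at an initial state, equivalently when some cycle of the transition graph is reachable from an initial state --- a standard reachability-and-cycle detection in a finite labeled graph. Combining the three steps gives a terminating procedure that answers the question, which establishes decidability.

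I do not expect a genuine obstacle here: all the subtlety --- the compactness argument and the quiet-letter analysis that turn an accepted preimage into an actual fixed point --- is already absorbed into Proposition~\ref{prop:FixedPointPower}, so the corollary is essentially a matter of stringing together effective subroutines. The only point deserving a word of care is the emptiness test for $\mathcal{L}_\infty$, since this is not the usual B\"uchi model; I would simply note that here emptiness reduces to the graph condition above. (I read the statement as quantifying over $k\geq 1$, matching Proposition~\ref{prop:FixedPointPower}.)
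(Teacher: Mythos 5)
Your proposal is correct and follows the paper's own route exactly: the paper derives this corollary from Proposition~\ref{prop:FixedPointPower} together with the decidability of emptiness of $\mathcal{L}_\infty$, which is all you do. Your write-up merely makes explicit the effectiveness details (computability of $\sigma^{-1}(\mathfrak{A})$, detecting the repetition $\sigma^{-n}(\mathfrak{A})=\sigma^{-m}(\mathfrak{A})$ by pigeonhole, and emptiness as reachability of a cycle from an initial state) that the paper leaves implicit.
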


As is, this method alone cannot determine, for instance, whether $\mathfrak{A}$ accepts a fixed point for $\sigma$ itself (without power). This problem is still decidable, as we show later in Proposition~\ref{prop:FixedPoint} with a refinement of this method. In appendix, we provide examples where $\mathfrak{A}$ accepts fixed points for some $\sigma^k$ where $k$ does not correspond to $m-n$ where $n < m$ are the minimal powers such that $\sigma^{-n}(\mathfrak{A}) = \sigma^{-m}(\mathfrak{A})$.

Now, we come back to purely substitutive words.
A purely substitutive word generated by $\sigma$ is also a fixed point for some $\sigma^k$ (in fact, it is a fixed point for every $\sigma^j$ with $j \geq 1$).

\begin{proposition}\label{prop:PureSubsWord}
    Let $\mathfrak{A}$ be an $\omega$-automaton, $\sigma$ a substitution and $n < m \leq |\mathfrak S(\mathfrak A)|+1$ such that $\sigma^{-n}(\mathfrak{A}) = \sigma^{-m}(\mathfrak{A})$.
    Let $RP_\sigma \subseteq \mathcal{A}$ be the set of letters $b$ that are right-prolongable for $\sigma$, i.e. $b \sqsubseteq_p \sigma(b)$ and $b \neq \sigma(b)$.
    Then, $\mathfrak{A}$ accepts a purely substitutive word generated by $\sigma$ iff $\sigma^{-n}(\mathfrak{A})$ accepts an infinite word beginning with an element of $RP_\sigma$.
\end{proposition}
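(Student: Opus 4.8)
The plan is to first pin down which words are at stake: the purely substitutive words generated by $\sigma$ are exactly the words $\sigma^\omega(b):=\lim_{j\to\infty}\sigma^j(b)$ for $b\in RP_\sigma$. Indeed, for a nonerasing $\sigma$ the limit $\lim_j\sigma^j(a)$ is a well-defined infinite word precisely when $a\sqsubseteq_p\sigma(a)$ and $\sigma(a)\neq a$: the first condition makes $(\sigma^j(a))_j$ a nested sequence of prefixes, while $\sigma(a)\neq a$ forces $|\sigma(a)|\geq 2$ and hence $|\sigma^j(a)|\to\infty$ since $\sigma$ is nonerasing. Every such $\sigma^\omega(b)$ is a fixed point of $\sigma$ and begins with the letter $b\in RP_\sigma$. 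This reduces the statement to relating these fixed points, read by $\mathfrak{A}$, to infinite words read by $\sigma^{-n}(\mathfrak{A})$ and beginning with a right-prolongable letter.

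For the forward direction, suppose $\mathfrak{A}$ accepts a purely substitutive word $x=\sigma^\omega(b)$ with $b\in RP_\sigma$. As $x$ is a fixed point of $\sigma$, we have $\sigma^n(x)=x\in\mathcal{L}_\infty(\mathfrak{A})$. By Propositions~\ref{prop:DesubAuto} and~\ref{prop:DesubComp}, $\mathcal{L}_\infty(\sigma^{-n}(\mathfrak{A}))=\{u:\sigma^n(u)\in\mathcal{L}_\infty(\mathfrak{A})\}$, so $x\in\mathcal{L}_\infty(\sigma^{-n}(\mathfrak{A}))$; since $x$ begins with $b\in RP_\sigma$, this gives the required word accepted by $\sigma^{-n}(\mathfrak{A})$.

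For the converse, suppose $\sigma^{-n}(\mathfrak{A})$ accepts an infinite word $y$ starting with some $b\in RP_\sigma$. Exactly as in the proof of Proposition~\ref{prop:FixedPointPower}, the loop $\sigma^{-n}(\mathfrak{A})=\sigma^{-m}(\mathfrak{A})$ yields $y\in\mathcal{L}_\infty(\sigma^{-(n+j(m-n))}(\mathfrak{A}))$ for all $j$, that is, $\sigma^{n+j(m-n)}(y)\in\mathcal{L}_\infty(\mathfrak{A})$ for all $j$. Because $y$ begins with $b$, the word $\sigma^{n+j(m-n)}(y)$ begins with $\sigma^{n+j(m-n)}(b)$, which is a prefix of $\sigma^\omega(b)$ whose length tends to infinity with $j$ (as $m>n$ and $b\in RP_\sigma$). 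Hence the unique adherence value of the sequence $(\sigma^{n+j(m-n)}(y))_j$ is $\sigma^\omega(b)$, so this sequence converges to $\sigma^\omega(b)$, and by compactness (closedness) of $\mathcal{L}_\infty(\mathfrak{A})$ the limit $\sigma^\omega(b)$ is accepted by $\mathfrak{A}$. Since $b\in RP_\sigma$, $\sigma^\omega(b)$ is a purely substitutive word generated by $\sigma$, which concludes.

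I expect the only delicate point to be the convergence argument in the converse: one must verify that $\sigma^\omega(b)$ is genuinely the limit of the lifted words $\sigma^{n+j(m-n)}(y)$. This is precisely where right-prolongability of $b$ is essential, as it guarantees that the shared prefixes $\sigma^{n+j(m-n)}(b)$ grow without bound; it is also what makes this argument cleaner than the general analysis of Proposition~\ref{prop:FixedPointPower}, which required the separate treatment of quiet letters.
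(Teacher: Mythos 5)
Your proof is correct and follows essentially the same route as the paper: the forward direction uses that the purely substitutive word is a fixed point, hence lies in $\mathcal{L}_\infty(\sigma^{-n}(\mathfrak{A}))$, and the converse iterates the loop $\sigma^{-n}(\mathfrak{A})=\sigma^{-m}(\mathfrak{A})$ and concludes by compactness. The only (immaterial) variation is that you lift the whole accepted infinite word $y$ and use closedness of $\mathcal{L}_\infty(\mathfrak{A})$ under limits, whereas the paper lifts only the finite prefixes $\sigma^{n+k(m-n)}(b)$ and invokes compactness for those.
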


\begin{proof}
    If $\mathfrak{A}$ accepts a purely substitutive word $u$ generated by $\sigma$, $u = \lim_{j \rightarrow \infty} \sigma^j(b)$ begins by an element of $RP_\sigma$.
    Since $\sigma(u) = u$, $\sigma^n(u)$ is accepted by $\mathfrak{A}$ so $u$ is accepted by $\sigma^{-n}(\mathfrak{A})$.

    On the converse, suppose that $\sigma^{-n}(\mathfrak{A})$ accepts an infinite word beginning by $b \in RP_\sigma$.
    Then, $\sigma^{m-n}(b)$ labels an accepting computation on $\sigma^{-m}(\mathfrak{A}) = \sigma^{-n}(\mathfrak{A})$.
    By iteration, for every $k \geq 1$, we have that $\sigma^{k(m-n)}(b)$ labels an accepting computation on $\sigma^{-n}(\mathfrak{A})$, so $\sigma^{n + k(m-n)}(b)$ always labels an accepting computation on $\mathfrak{A}$. By compactness, $u = \lim\limits_{k \rightarrow \infty} \sigma^{n + k(m-n)}(b)$ is accepted by $\mathfrak{A}$.
    Now, because $b \in RP_\sigma$, the word $\lim\limits_{j \rightarrow \infty} \sigma^j(b)$ is defined and equal to $u$.
    Therefore $u$, the purely substitutive word generated by $\sigma$ on the letter $b$, is accepted by $\mathfrak{A}$.
\end{proof}

The following result already appeared in \cite{salo_2022}, but an erratum clarified that some cases were not covered \cite{salo_erratum_2022}. It is a parallel to a result in \cite{carton_thomas_2002}.
Our proof is essentially the same, but writing the proof through the lens of desubstitution makes it easier to extend the result to other decision problems.

\begin{corollary}
    The problem of the purely substitutive walk is decidable:
    \problem{an $\omega$-automaton $\mathfrak{A}$, a homomorphism $\sigma$.}{Does $\mathfrak{A}$ accept some purely substitutive word generated by $\sigma$?}
\end{corollary}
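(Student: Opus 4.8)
The plan is to read off decidability directly from Proposition~\ref{prop:PureSubsWord}, turning each of its ingredients into an effective computation. First I would dispose of the erasing case: as noted above, a purely substitutive word generated by an erasing homomorphism $\sigma$ is generated by an effectively constructible substitution $\tau$, so I may assume without loss of generality that the input homomorphism $\sigma$ is nonerasing.

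Next I would compute the sequence $\mathfrak{A}, \sigma^{-1}(\mathfrak{A}), \sigma^{-2}(\mathfrak{A}), \dots$ of desubstituted $\omega$-automata. Each desubstitution is effective: by Definition~\ref{def:DesubAuto}, deciding whether $q_1 \xrightarrow[]{a} q_2$ belongs to the new transition set amounts to checking whether $\sigma(a)$ labels a walk from $q_1$ to $q_2$, a finite reachability test. Since all these automata live in the finite set $\mathfrak{S}(\mathfrak{A})$ (Fact~\ref{fct:DesubFinite}), I would iterate at most $|\mathfrak{S}(\mathfrak{A})|+1$ times and, by Fact~\ref{fct:LoopAutomata}, detect indices $n < m$ with $\sigma^{-n}(\mathfrak{A}) = \sigma^{-m}(\mathfrak{A})$, in particular obtaining the automaton $\sigma^{-n}(\mathfrak{A})$ required by Proposition~\ref{prop:PureSubsWord}.

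Then I would compute the set $RP_\sigma$ of right-prolongable letters, which is immediate: for each of the finitely many letters $b \in \mathcal{A}$, test the two conditions $b \sqsubseteq_p \sigma(b)$ and $b \neq \sigma(b)$. By Proposition~\ref{prop:PureSubsWord}, $\mathfrak{A}$ accepts a purely substitutive word generated by $\sigma$ if and only if $\sigma^{-n}(\mathfrak{A})$ accepts some infinite word whose first letter lies in $RP_\sigma$, so it remains to decide this last property. For a fixed $b \in RP_\sigma$, acceptance of an infinite word beginning with $b$ is equivalent to the existence of an initial state $q_0$, a transition $q_0 \xrightarrow[]{b} q_1$ in $\sigma^{-n}(\mathfrak{A})$, and an infinite walk out of $q_1$; the latter is the standard graph condition that $q_1$ can reach a vertex lying on a cycle, which is decidable, and is exactly an emptiness test for the infinite-word language restricted to the starting letter $b$. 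Running this test for each $b \in RP_\sigma$ settles the question.

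The proof is thus a routine effectivisation, the whole argument resting on Proposition~\ref{prop:PureSubsWord}. The step I would treat with most care is the reduction of ``accepts an infinite word beginning with an element of $RP_\sigma$'' to plain emptiness of $\mathcal{L}_\infty$: here one must use that the acceptance condition (every walk from an initial state is accepting) makes the \emph{existence of an infinite forward walk} the correct notion, rather than reachability of a distinguished set of final states, so that checking for a reachable vertex on a cycle genuinely certifies an accepted infinite word.
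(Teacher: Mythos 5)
Your proposal is correct and takes essentially the same route as the paper: the corollary is meant to follow directly from Proposition~\ref{prop:PureSubsWord} together with the effective construction of $\sigma^{-n}(\mathfrak{A})$ (via Facts~\ref{fct:DesubFinite} and~\ref{fct:LoopAutomata}), the reduction of an erasing $\sigma$ to a substitution, and a decidable emptiness test, which is exactly what you spell out. Your extra care about testing ``accepts an infinite word beginning with a letter of $RP_\sigma$'' via reachability of a cycle is a faithful effectivisation of the paper's appeal to decidability of emptiness, correctly exploiting that the acceptance condition here is just the existence of an infinite walk from an initial state.
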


This result extends to morphic words: to find a morphic word generated by $\sigma$ and $\tau$ accepted by $\mathfrak{A}$, find a purely substitutive word generated by $\sigma$ accepted by $\tau^{-1}(\mathfrak{A})$.

We now extend the method used to prove Proposition~\ref{prop:PureSubsWord} to solve the question of finding a pure fixed point for a substitution $\sigma$ in an $\omega$-automaton. This improves Proposition~\ref{prop:FixedPointPower} where we found a fixed point for some power of $\sigma$.

\begin{proposition}\label{prop:FixedPoint}
    The problem of the fixed point walk is decidable:
    \problem{an $\omega$-automaton $\mathfrak{A}$, a substitution $\sigma$.}{Does $\mathfrak{A}$ accepts a fixed point for $\sigma$?}
\end{proposition}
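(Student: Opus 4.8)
The goal is to decide whether $\mathfrak{A}$ accepts a fixed point of $\sigma$ itself, not merely of some power $\sigma^k$. The obstacle with the power-based approach (Proposition~\ref{prop:FixedPointPower}) is that a fixed point of $\sigma^{m-n}$ need not be a fixed point of $\sigma$: we might find $\widetilde{x}$ with $\sigma^{m-n}(\widetilde{x}) = \widetilde{x}$ but $\sigma(\widetilde{x}) \neq \widetilde{x}$. So the strategy is to combine the periodicity argument of Proposition~\ref{prop:FixedPointPower} with the prolongability idea of Proposition~\ref{prop:PureSubsWord}, but now localizing the fixed-point condition to the very first coordinate so that we control $\sigma$ directly rather than $\sigma^{m-n}$.

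First I would recall the characterization of fixed points: an infinite word $x$ is a fixed point of $\sigma$ if and only if its first letter $a = x_0$ satisfies $a \sqsubseteq_p \sigma(a)$ (so that $\sigma^j(a)$ form a nested, converging sequence) and $x = \lim_{j} \sigma^j(a)$. When $a \neq \sigma(a)$ this $a$ is right-prolongable in the sense of $RP_\sigma$; the remaining case is when $\sigma(a) = a$, i.e. $a$ is a genuine letter-fixed point, and then $x$ may continue with any word consistent with fixed-point structure. The key reduction is therefore: $\mathfrak{A}$ accepts a fixed point of $\sigma$ if and only if there is a letter $a$ with $a \sqsubseteq_p \sigma(a)$ such that the word $x = \lim_j \sigma^j(a)$ is accepted, and this limit is accepted precisely when it lies in every $\mathcal{L}_\infty(\sigma^{-p}(\mathfrak{A}))$. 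Using $\sigma^{-n}(\mathfrak{A}) = \sigma^{-m}(\mathfrak{A})$ and Proposition~\ref{prop:DesubComp}, acceptance of such a limit can again be tested at the stabilized automaton $\sigma^{-n}(\mathfrak{A})$.

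The main technical step, and the hard part, is handling the initial letter $a$ correctly. For a right-prolongable $a \in RP_\sigma$ the limit $\lim_j \sigma^j(a)$ exists and the argument of Proposition~\ref{prop:PureSubsWord} directly applies: one checks whether $\sigma^{-n}(\mathfrak{A})$ accepts an infinite word beginning with such an $a$, and if so the purely substitutive word it generates is a genuine fixed point. The subtle case is $\sigma(a) = a$ (a quiet first letter), where the fixed point is not forced to be purely substitutive and $a$ alone does not determine $x$. Here I would argue by iterating the acceptance through the loop $\sigma^{-n} = \sigma^{-m}$ as in Proposition~\ref{prop:FixedPointPower}, extract an adherence value $\widetilde{x}$ accepted by $\sigma^{-n}(\mathfrak{A})$, and then correct it to a true $\sigma$-fixed point: since $\sigma(\widetilde{x}_0) = \widetilde{x}_0$ when $\widetilde{x}_0$ is quiet, the condition $\widetilde{x}_{\llbracket 0,k\rrbracket} \sqsubseteq_p \sigma^{rq(m-n)}(\widetilde{x}_{\llbracket 0,k\rrbracket})$ from the previous proof must be upgraded to prolongability under $\sigma$ at exponent one, using that quiet prefixes are fixed letter-by-letter and the first non-quiet letter is prolongable under some power and hence can be arranged to be prolongable under $\sigma$ itself.

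Finally, decidability follows from finiteness of $\mathfrak{S}(\mathfrak{A})$ together with the decidability of language-emptiness for $\omega$-automata: we compute $\sigma^{-n}(\mathfrak{A})$ (reachable since $n \le |\mathfrak{S}(\mathfrak{A})|$), enumerate the finitely many candidate initial letters $a$ with $a \sqsubseteq_p \sigma(a)$, and for each test whether $\sigma^{-n}(\mathfrak{A})$ accepts an infinite word beginning with $a$ and compatible with the fixed-point structure. Each such test is an emptiness check on an $\omega$-automaton obtained by restricting initial states to those emitting $a$, so the whole procedure is effective. I expect the bookkeeping in the quiet-letter case — ensuring the extracted word is fixed by $\sigma$ and not merely by $\sigma^{m-n}$ — to be where the real work lies.
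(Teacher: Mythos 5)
Your split according to the first letter points in the right direction, but the quiet-letter case --- which you yourself flag as ``where the real work lies'' --- contains a genuine gap, and the mechanism you sketch to close it is false. You propose to rerun the adherence-value argument of Proposition~\ref{prop:FixedPointPower} to get a fixed point $\widetilde{x}$ of some power $\sigma^{rq(m-n)}$, and then to ``correct'' it via the claim that the first non-quiet letter ``is prolongable under some power and hence can be arranged to be prolongable under $\sigma$ itself''. Prolongability under a power does not imply prolongability under $\sigma$: for $\tau(0)=11$, $\tau(1)=00$ (the paper's own appendix example), one has $0 \sqsubseteq_p \tau^2(0)=0000$ and $0^\infty$ is a fixed point of $\tau^2$, yet $\tau$ has no fixed point at all. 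Moreover, the concrete test you end with --- check whether $\sigma^{-n}(\mathfrak{A})$ accepts an infinite word beginning with a candidate letter $a$ --- is unsound when $a$ is quiet. Take $\sigma(0)=0$, $\sigma(1)=22$, $\sigma(2)=11$ and let $\mathfrak{A}$ accept exactly $01^\infty$ (here $\sigma^{-2}(\mathfrak{A})=\mathfrak{A}$, so one may take $n=0$). The only letter with $a \sqsubseteq_p \sigma(a)$ is $0$, and $\sigma^{-n}(\mathfrak{A})$ does accept a word beginning with $0$; but $RP_\sigma=\emptyset$, so the only fixed point of $\sigma$ is $0^\infty$, which is not accepted. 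The qualifier ``compatible with the fixed-point structure'' is never given an automaton-theoretic meaning, and that is exactly the missing content.

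What closes the gap is a purely structural fact that needs no adherence values and no loop $\sigma^{-n}(\mathfrak{A})=\sigma^{-m}(\mathfrak{A})$ on $\mathfrak{A}$ itself. Let $FP_\sigma=\{b \in \mathcal{A} \mid \sigma(b)=b\}$. If $x=\sigma(x)$ and $x$ is not entirely over $FP_\sigma$, write $x = p\,a\,x'$ with $p \in FP_\sigma^*$ and $a$ the first non-quiet letter; then $x=\sigma(x)=p\,\sigma(a)\,\sigma(x')$ forces $a \sqsubseteq_p \sigma(a)$ with $\sigma(a)\neq a$, i.e.\ $a \in RP_\sigma$, and iterating gives $x = p \lim_{j\to\infty} \sigma^j(a)$. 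So the fixed points of $\sigma$ are exactly the infinite words over $FP_\sigma$, together with the words of the form (finite word over $FP_\sigma$) followed by (purely substitutive word generated by $\sigma$). This yields the paper's algorithm: (1) check whether $\mathfrak{A}$ accepts an infinite word over $FP_\sigma$ (emptiness of $\mathfrak{A}$ restricted to $FP_\sigma$-labelled transitions); (2) build $\mathfrak{A}'$, equal to $\mathfrak{A}$ except that its initial states are \emph{all} states reachable in $\mathfrak{A}$ by finite words over $FP_\sigma$ --- not merely states one candidate letter away --- and run the decision procedure of Proposition~\ref{prop:PureSubsWord} on $\mathfrak{A}'$. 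Your handling of a right-prolongable first letter is case (2) with $p=\varepsilon$; what is missing is the reduction of the entire quiet case to these two checks, instead of a strengthened rerun of Proposition~\ref{prop:FixedPointPower}.
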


\begin{proof}
    Let $x$ be a fixed point for $\sigma$ and define $FP_\sigma = \{b \in \mathcal{A}\ |\ \sigma(b) = b\}$ be the set of letters which are fixed points under $\sigma$.
    There are two cases:
\begin{enumerate}[topsep=0pt]
\item $x$ is an infinite word on the alphabet $FP_\sigma$.

\item there is a letter $a$ appearing in $x$ such that $\sigma(a) \neq a$.
    Suppose that $a$ is the first such letter in $x$. Then $x$ can be written as $x = p a x'$ where $p$ is a word on $FP_\sigma$.
    We have that $x = \sigma(x) = \sigma(p)\sigma(a) \sigma(x') = p \sigma(a) \sigma(x')$.
    So $a \sqsubseteq_p \sigma(a)$: $a$ is right-prolongable for $\sigma$, so $\lim\limits_{n \rightarrow \infty} \sigma^n(a)$ exists. Since $x = \sigma^n(x) = p \sigma^n(a) \sigma^n(x')$ for every $n \in \mathbb{N}$, by compactness, $x = p \lim\limits_{n \rightarrow \infty} \sigma^n(a)$.
\end{enumerate}
    The algorithm works as follows.
    First (case 1), check whether $\mathfrak{A}$ accepts a word on the alphabet $FP_\sigma$.
    Second (case 2), define a new automata $\mathfrak{A}'$ which is equal to $\mathfrak{A}$ except that the set of initial states is all the states reachable in $\mathfrak A$ by words in $FP_\sigma$, and check (by the previous algorithm) if $\mathfrak{A}'$ accepts a purely substitutive word generated by $\sigma$.

    The algorithm outputs "yes" if either case is satisfied, and "no" otherwise.
\end{proof}

\subsection{The problem of the infinitely desubstitutable walk}

In this section, we suppose that $\mathfrak A$ is an $\omega$-automaton and $\mathcal{S}$ is a finite set of substitutions (i.e. nonerasing homomorphisms, as is usual when studying multiple homomorphisms) on a single alphabet $\mathcal{A}$. We prove that the problem of finding an infinitely desubstitutable (infinite) word accepted by $\mathfrak A$ is decidable. To study this question, we introduce a meta-$\omega$-automaton:
each symbol is a substitution, and each state is an $\omega$-automaton.

\begin{definition}[The meta-$\omega$-automaton $\mathcal{S}^{-\infty}(\mathfrak{A})$]
We define the $\omega$-automaton $\mathcal{S}^{-\infty}(\mathfrak{A}) = (\mathcal{S}, D(\mathfrak{A}), \{\mathfrak{A}\}, \mathcal{T})$ with the alphabet $\mathcal{S}$, the set of states $D(\mathfrak{A}) = \{\sigma^{-1}(\mathfrak{A}),\ \sigma \in \mathcal{S}^\ast\}$, $\mathfrak A$ the only initial state and set of transitions $\mathcal{T} = \{ \mathfrak{B} \xrightarrow[]{\sigma} \sigma^{-1}(\mathfrak{B})\ |\ \mathfrak{B} \in D(\mathfrak{A}), \sigma \in \mathcal{S}\}$.
\end{definition}

Because $D(\mathfrak{A}) \subseteq \mathfrak{S}(\mathfrak{A})$ is finite (see Fact~\ref{fct:DesubFinite}), $\mathcal{S}^{-\infty}(\mathfrak{A})$ is computable.
We prove that directive sequences of words accepted by $\mathfrak{A}$ correspond to \emph{non-nilpotent} walks in $\mathcal{S}^{-\infty}(\mathfrak{A})$, that is, walks $(\mathfrak{B}_n)_{n \in \mathbb{N}}$ such that $\mathcal{L}_\infty(\mathfrak{B}_n) \neq \emptyset$ for all $n$.

\begin{proposition}
\label{prop:InfDesub}
There exists $x$ an infinite word infinitely desubstitutable by $(\sigma_n)_{n \in \mathbb N}$ accepted by $\mathfrak{A}$ if, and only if, there is a non-nilpotent infinite walk in $\mathcal{S}^{-\infty}(\mathfrak{A})$ labeled by $(\sigma_n)_{n \in \mathbb{N}}$.\end{proposition}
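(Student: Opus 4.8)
The plan is to begin by unwinding what the walk labeled by $(\sigma_n)_{n\in\mathbb N}$ actually is. The meta-$\omega$-automaton is deterministic and complete: each state $\mathfrak B$ has exactly one $\sigma$-transition, namely $\mathfrak B \xrightarrow{\sigma} \sigma^{-1}(\mathfrak B)$. Hence the label sequence determines a \emph{unique} walk $\mathfrak B_0 \xrightarrow{\sigma_0} \mathfrak B_1 \xrightarrow{\sigma_1} \cdots$ with $\mathfrak B_0 = \mathfrak A$ and $\mathfrak B_{n+1} = \sigma_n^{-1}(\mathfrak B_n)$. Iterating this and applying Proposition~\ref{prop:DesubComp} yields $\mathfrak B_{n+1} = \sigma_{\llbracket 0,n \rrbracket}^{-1}(\mathfrak A)$, so that by Proposition~\ref{prop:DesubAuto}
\[\mathcal L_\infty(\mathfrak B_{n+1}) = \sigma_{\llbracket 0,n \rrbracket}^{-1}(\mathcal L_\infty(\mathfrak A)).\]
Thus non-nilpotency of this walk means precisely: for every $n$ there is an infinite word $z$ with $\sigma_{\llbracket 0,n \rrbracket}(z) \in \mathcal L_\infty(\mathfrak A)$.

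For the forward implication, suppose $x$ is infinitely desubstitutable by $(\sigma_n)$ and accepted by $\mathfrak A$, with witnessing sequence $(x_n)$, $x_0 = x$, so that $x = \sigma_{\llbracket 0,n \rrbracket}(x_{n+1})$ for all $n$. Since $x \in \mathcal L_\infty(\mathfrak A)$, each $x_{n+1}$ lies in $\sigma_{\llbracket 0,n \rrbracket}^{-1}(\mathcal L_\infty(\mathfrak A)) = \mathcal L_\infty(\mathfrak B_{n+1})$, and $x \in \mathcal L_\infty(\mathfrak B_0)$; hence every state of the walk has nonempty language, i.e.\ the walk is non-nilpotent.

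The substance is the backward implication, which I would handle by a compactness argument on the nested family $K_n := \sigma_{\llbracket 0,n \rrbracket}\big(\mathcal L_\infty(\mathfrak B_{n+1})\big)$ for $n \geq 0$. Each $K_n$ is the continuous image of the closed, hence compact, language $\mathcal L_\infty(\mathfrak B_{n+1})$, so it is compact; it is nonempty by non-nilpotency; and $K_n \subseteq \mathcal L_\infty(\mathfrak A)$. The key point to verify is that they are nested, $K_{n+1} \subseteq K_n$: any $w = \sigma_{\llbracket 0,n+1 \rrbracket}(z) \in K_{n+1}$ with $z \in \mathcal L_\infty(\mathfrak B_{n+2})$ rewrites as $w = \sigma_{\llbracket 0,n \rrbracket}\big(\sigma_{n+1}(z)\big)$, and $\sigma_{n+1}(z) \in \mathcal L_\infty(\mathfrak B_{n+1})$ by Proposition~\ref{prop:DesubAuto} applied to $\mathfrak B_{n+1}$ and $\sigma_{n+1}$, whence $w \in K_n$. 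A decreasing sequence of nonempty compact subsets of $\mathcal A^{\mathbb N}$ has nonempty intersection, so I may pick $x \in \bigcap_n K_n$. Then $x$ is accepted by $\mathfrak A$, and for every $n$ the membership $x \in K_n$ supplies a word $x_{n+1}$ with $x = \sigma_{\llbracket 0,n \rrbracket}(x_{n+1})$; by the compactness characterization of infinite desubstitutability recalled before this proposition, this is exactly the statement that $x$ is infinitely desubstitutable by $(\sigma_n)$.

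I expect the main obstacle to be this backward direction, and more precisely the step of upgrading \emph{per-level} desubstitutions (a possibly different witness $x_{n+1}$ at each level) into a single genuinely infinitely desubstitutable word. The nested-compact-sets argument is what bridges this gap; the delicate verifications are that the $K_n$ are compact (as continuous images of closed $\omega$-automaton languages) and, above all, nested, for which Propositions~\ref{prop:DesubAuto} and~\ref{prop:DesubComp} are the essential ingredients.
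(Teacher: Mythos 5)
Your proof is correct and follows essentially the same route as the paper: the forward direction via Proposition~\ref{prop:DesubAuto} applied to the desubstituted words, and the backward direction via Cantor's intersection theorem on the nested compact sets $\sigma_{\llbracket 0,n \rrbracket}\bigl(\mathcal L_\infty(\sigma_{\llbracket 0,n \rrbracket}^{-1}(\mathfrak A))\bigr)$. Your write-up is in fact slightly more careful than the paper's, since you explicitly verify the nestedness of the $K_n$ and explicitly invoke the compactness characterization of infinite desubstitutability at the end.
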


%This result has structural consequences on the set of directive sequences:

\begin{corollary}\label{prop:SAdicLanguage}
    The set of directive sequences of infinitely desubstitutable words accepted by $\mathfrak A$ is the language of some $\omega$-automaton.
%    Let $Rep(\mathfrak A, \mathcal{S}) = \{(\sigma_n)_{n \in \mathbb N}\ |\ \exists x \in \mathcal{L}_\infty(\mathfrak A), x \text{ is inf. desub. by } (\sigma_n) \}$ be the set of directive sequences of infinitely desubstitutable words accepted by $\mathfrak A$.
%    Then $Rep(\mathfrak A, \mathcal{S})$ is the language of some $\omega$-automaton.
\end{corollary}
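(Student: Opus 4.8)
The plan is to reuse the meta-$\omega$-automaton $\mathcal{S}^{-\infty}(\mathfrak{A})$ directly and merely prune the states that can never carry an accepting walk. By Proposition~\ref{prop:InfDesub}, the directive sequences I want to capture are exactly the labels of the \emph{non-nilpotent} infinite walks in $\mathcal{S}^{-\infty}(\mathfrak{A})$, i.e. the walks $(\mathfrak{B}_n)_{n \in \mathbb{N}}$ with $\mathcal{L}_\infty(\mathfrak{B}_n) \neq \emptyset$ for every $n$. So the whole task reduces to converting this non-nilpotency side condition on walks into an ordinary acceptance-by-walk condition, which in this acceptance model (every walk from an initial state is accepting) is precisely what deleting states achieves.

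Concretely, I would call a state $\mathfrak{B} \in D(\mathfrak{A})$ \emph{dead} if $\mathcal{L}_\infty(\mathfrak{B}) = \emptyset$, and define $\mathfrak{M}$ to be the $\omega$-automaton obtained from $\mathcal{S}^{-\infty}(\mathfrak{A})$ by removing every dead state together with all transitions incident to a dead state, keeping $\mathfrak{A}$ as the only initial state if it survives. The key observation is that ``non-nilpotent'' means precisely ``visits only non-dead states'': by definition a walk is non-nilpotent iff every state it meets has nonempty language, i.e. is not dead. Hence an infinite walk of $\mathcal{S}^{-\infty}(\mathfrak{A})$ is non-nilpotent iff it lies entirely within $\mathfrak{M}$, and conversely every infinite walk of $\mathfrak{M}$ visits only non-dead states and is therefore non-nilpotent.

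With $\mathfrak{M}$ in hand I would establish the two inclusions giving $\mathcal{L}_\infty(\mathfrak{M})$ equal to the set of directive sequences of infinitely desubstitutable accepted words. For one direction, a directive sequence $(\sigma_n)_{n \in \mathbb{N}}$ of such a word yields, via Proposition~\ref{prop:InfDesub}, a non-nilpotent walk labeled by $(\sigma_n)_{n \in \mathbb{N}}$; this walk survives in $\mathfrak{M}$, so the sequence lies in $\mathcal{L}_\infty(\mathfrak{M})$. For the other, any word of $\mathcal{L}_\infty(\mathfrak{M})$ labels an infinite accepting walk of $\mathfrak{M}$, which is a non-nilpotent walk of $\mathcal{S}^{-\infty}(\mathfrak{A})$, and Proposition~\ref{prop:InfDesub} turns it back into a directive sequence of an accepted infinitely desubstitutable word. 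The degenerate case is when $\mathfrak{A}$ itself is dead: then $\mathfrak{M}$ has no initial state, so $\mathcal{L}_\infty(\mathfrak{M}) = \emptyset$, and indeed no infinitely desubstitutable word is accepted, so the statement still holds.

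I do not expect a serious obstacle, since pruning is the natural way to enforce a ``survives forever'' condition in this model. The only points needing care are that the deletion never removes a transition between two non-dead states (it does not, by construction, as we only delete transitions incident to a dead state), and the effectivity remark: $\mathcal{S}^{-\infty}(\mathfrak{A})$ is computable because $D(\mathfrak{A}) \subseteq \mathfrak{S}(\mathfrak{A})$ is finite by Fact~\ref{fct:DesubFinite}, and the dead states can be identified since emptiness of $\mathcal{L}_\infty$ is decidable, so $\mathfrak{M}$ is not merely an $\omega$-automaton witnessing the claim but an explicitly constructible one.
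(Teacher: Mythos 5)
Your proposal is correct and matches the paper's own (implicit) argument: the paper also derives this corollary from Proposition~\ref{prop:InfDesub} by pruning from $\mathcal{S}^{-\infty}(\mathfrak{A})$ the states whose language of infinite words is empty, exactly as you do when constructing $\mathfrak{M}$. Your added care about the degenerate case (when $\mathfrak{A}$ itself is dead) and about effectivity is consistent with the paper's remarks following Proposition~\ref{prop:InfDesub}.
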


\begin{proof}[of Proposition~\ref{prop:InfDesub}]
First, let $x$ be an infinitely desubstitutable word with directive sequence $(\sigma_n)_{n \in \mathbb N}$, and let $(x_n)_{n \in \mathbb N}$ be the sequence of desubstituted words.
Then, by Proposition~\ref{prop:DesubAuto}, $x_n \in \mathcal{L}_\infty((\sigma_1 \circ \dots \circ \sigma_{n-1})^{-1}(\mathfrak{A}))$.
So the walk $(\sigma_{\llbracket 0, n \rrbracket}^{-1}(\mathfrak{A}))_{n \in \mathbb N}$ is non-nilpotent and labeled by $(\sigma_n)_{n \in \mathbb N}$.

Second, let $(\sigma_n)_{n \in \mathbb{N}}$ label a non-nilpotent infinite walk in $\mathcal{S}^{-\infty}(\mathfrak{A})$.
It means that each language $(\sigma_1 \circ \dots \circ \sigma_k)^{-1}(\mathcal{L}_\infty(\mathfrak{A}))$ is nonempty.
Now, consider the sequence $((\sigma_1 \circ \dots \circ \sigma_n)(\mathcal{L}_\infty((\sigma_1 \circ \dots \circ \sigma_n)^{-1}(\mathfrak{A}))))_{n \in \mathbb{N}}$.
It satisfies the following:

\begin{enumerate}[topsep = 0pt]
    \item each element of the sequence is included in $\mathcal{L}_\infty(\mathfrak{A})$;
    \item because $\mathcal{L}_\infty((\sigma_1 \circ \dots \circ \sigma_n)^{-1}(\mathfrak{A})))$ is compact and nonempty, and $(\sigma_1 \circ \dots \circ \sigma_n)$ is continuous, every element of the sequence is compact and nonempty;
    \item the sequence is decreasing for inclusion.
\end{enumerate}

By Cantor's intersection theorem, there is a point $x$ in the intersection of every element of the sequence.
This point $x$ is desubstitutable by any $\sigma_1 \circ \dots \circ \sigma_k$, thus it is infinitely desubstitutable by the sequence $(\sigma_n)_{n \in \mathbb{N}}$.
\end{proof}

With Proposition \ref{prop:InfDesub}, we can deduce the decidability of the existence of an infinitely desubstitutable word accepted by an $\omega$-automaton $\mathfrak A$.
First, build $\mathcal{S}^{-\infty}(\mathfrak{A})$; second, remove the states corresponding to $\omega$-automata with an empty language; last, check whether there is an infinite walk.

\begin{proposition}\label{prop:InfDesubWalk}
The problem of the infinitely desubstitutable walk is decidable:
\problem{a finite set of substitutions $\mathcal{S}$, an $\omega$-automaton $\mathfrak{A}$}{does $\mathcal{L}_\infty(\mathfrak{A})$ contain a word which is infinitely desubstitutable by $\mathcal{S}$?}
\end{proposition}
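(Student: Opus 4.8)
The plan is to reduce the semantic question about accepted words to a purely combinatorial reachability question on the finite graph $\mathcal{S}^{-\infty}(\mathfrak{A})$, invoking Proposition~\ref{prop:InfDesub}. By that proposition, $\mathcal{L}_\infty(\mathfrak{A})$ contains a word infinitely desubstitutable by $\mathcal{S}$ if, and only if, $\mathcal{S}^{-\infty}(\mathfrak{A})$ admits a non-nilpotent infinite walk (labeled by some directive sequence, which we no longer need to track explicitly). It therefore suffices to decide whether such a walk exists.

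First I would effectively construct $\mathcal{S}^{-\infty}(\mathfrak{A})$. This is possible because $D(\mathfrak{A}) \subseteq \mathfrak{S}(\mathfrak{A})$ is finite (Fact~\ref{fct:DesubFinite}) and each desubstitution $\sigma^{-1}(\mathfrak{B})$ is computable from $\mathfrak{B}$ and $\sigma$ (Definition~\ref{def:DesubAuto}). Concretely, start from the single state $\mathfrak{A}$ and repeatedly apply every $\sigma \in \mathcal{S}$, adding each edge $\mathfrak{B} \xrightarrow[]{\sigma} \sigma^{-1}(\mathfrak{B})$ and any newly produced state; since $\mathfrak{S}(\mathfrak{A})$ is finite, this saturation terminates after finitely many steps.

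Next I would prune the states with empty infinite language. Emptiness of $\mathcal{L}_\infty(\mathfrak{B})$ is decidable for each state $\mathfrak{B} \in D(\mathfrak{A})$ --- under our acceptance condition it amounts to detecting a cycle reachable from an initial state in the finite automaton $\mathfrak{B}$ --- so I can remove every such state together with its incident edges, obtaining a pruned finite graph. By the very definition of a non-nilpotent walk (all visited states have nonempty infinite language), the non-nilpotent infinite walks of $\mathcal{S}^{-\infty}(\mathfrak{A})$ are exactly the infinite walks of the pruned graph starting at $\mathfrak{A}$.

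Finally, since the pruned graph is finite, an infinite walk from $\mathfrak{A}$ exists if, and only if, $\mathfrak{A}$ survived pruning and some cycle is reachable from $\mathfrak{A}$ within the pruned graph --- a standard reachability test --- and the algorithm outputs ``yes'' precisely in that case. I expect no real obstacle beyond Proposition~\ref{prop:InfDesub} itself, which already carries the analytic content: the only points requiring care are checking that pruning preserves exactly the non-nilpotent walks, and that in a finite graph the existence of an infinite walk reduces to the reachability of a cycle. Both are routine.
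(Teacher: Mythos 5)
Your proposal is correct and follows essentially the same route as the paper, which proves this proposition by the same three-step algorithm: build $\mathcal{S}^{-\infty}(\mathfrak{A})$, prune states with empty infinite language, and test for an infinite walk from $\mathfrak{A}$. Your additional details --- the saturation argument for computability, the equivalence between pruning and non-nilpotency, and the reduction of infinite-walk existence to cycle reachability --- are exactly the routine verifications the paper leaves implicit.
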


\subsection{The problem of the Büchi infinitely desubstitutable walk}

Proposition~\ref{prop:InfDesubWalk} does not apply directly to Sturmian words. Indeed, the classical characterization of Sturmian words restricts the possible directive sequences. 

$\mathcal{S}_{St}$ is the set containing the four following substitutions, called (elementary) Sturmian morphisms, as described by \cite{berstel_seebold_2002}.
\vspace{-0.4em}
\[L_0 : \left\{\begin{array}{ccc}
    0 & \mapsto & 0 \\
    1 & \mapsto & 01
\end{array}\right.,\quad L_1 : \left\{\begin{array}{ccc}
    0 & \mapsto & 10 \\
    1 & \mapsto & 1
\end{array}\right.,\quad R_0 : \left\{\begin{array}{ccc}
    0 & \mapsto & 0 \\
    1 & \mapsto & 10
\end{array}\right.,\quad R_1 : \left\{\begin{array}{ccc}
    0 & \mapsto & 01 \\
    1 & \mapsto & 1
\end{array}\right.\]

\begin{theorem}[\cite{arnoux_fogg_2002}]
     \label{thm:SturmArnoux}
    A word is Sturmian iff it is infinitely desubstitutable by a directive sequence $(\sigma_n)_{n \in \mathbb{N}} \subset \mathcal{S}_{St}$ that alternates infinitely in type, i.e.: $\nexists N \in \mathbb{N}, (\forall n \geq N, \sigma_n \in \{L_0, R_0\})$ or $(\forall n \geq N, \sigma_n \in \{L_1, R_1\})$.
\end{theorem}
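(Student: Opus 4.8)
The plan is to combine the classical combinatorial description of Sturmian words --- the aperiodic balanced infinite words over $\{0,1\}$, equivalently those of factor complexity $n+1$ --- with an explicit desubstitution algorithm, reading the subscript of each elementary morphism as its \emph{type} (the letter it fixes). I will prove the two implications separately, and isolate a single gap-counting mechanism that controls the alternation condition on both sides.

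For the direction from Sturmian to an alternating directive sequence, recall that balance forces exactly one of $00$, $11$ to be absent from an aperiodic word $x$ (a balanced word cannot contain both, and containing neither would make it periodic). If $11$ is absent, every $1$ sits inside a nonempty block of $0$'s, and $x = \sigma(x_1)$ for a suitable type-$0$ morphism $\sigma \in \{L_0, R_0\}$, the choice of $L$ versus $R$ being dictated by the first letter; since $\sigma$ is a Sturmian morphism and $x$ is Sturmian, $x_1$ is again Sturmian. Iterating produces an infinite directive sequence $(\sigma_n) \subseteq \mathcal{S}_{St}$, and aperiodicity guarantees the process never stalls on a one-letter word. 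To obtain infinite alternation, I use the key observation that a type-$0$ desubstitution removes exactly one $0$ adjacent to each $1$, hence decreases every gap between consecutive occurrences of $1$ by exactly one (symmetrically for type-$1$). Therefore, if the sequence were type-$0$ for all $n \geq N$, the finite gaps of the tail word $x_N$ --- which must remain positive, since images of $L_0, R_0$ never contain $11$ --- would have to decrease without bound, which is impossible unless $x_N$ contains finitely many $1$'s, i.e. $x_N$ is eventually $0^\infty$. But $x_N$ is a desubstitution of the Sturmian word $x$ by a Sturmian morphism, hence Sturmian, a contradiction; so the type changes infinitely often.

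For the converse, given an infinitely alternating directive sequence I recover $x$ exactly as in the proof of Proposition~\ref{prop:InfDesub}, as a point of the Cantor intersection $\bigcap_n \sigma_{\llbracket 0,n \rrbracket}(\mathcal{L}_\infty(\sigma_{\llbracket 0,n\rrbracket}^{-1}(\mathfrak{A})))$, and must certify that $x$ is Sturmian. Aperiodicity comes from the gap mechanism read backwards: an eventually periodic word desubstitutes with strictly decreasing period until it stabilizes on a one-letter word, after which only one type remains available, contradicting the assumed alternation. Balance I would derive from the structure theory of Sturmian morphisms: every finite product $\sigma_{\llbracket 0,n\rrbracket}$ is a Sturmian morphism and maps balanced words to balanced words with bounded extra imbalance, so the limit $x$ is balanced. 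Balance together with aperiodicity yields that $x$ is Sturmian.

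I expect the balance estimate in the converse to be the main obstacle: one must control the imbalance introduced by the composed morphism $\sigma_{\llbracket 0,n\rrbracket}$ uniformly in $n$, rather than merely obtaining an ``almost balanced'' limit. Conceptually this is where the arithmetic of Sturmian words enters --- the runs of equal type in the directive sequence encode the partial quotients of a continued fraction whose value is an irrational slope precisely when the types alternate infinitely --- and a fully self-contained treatment would route both the balance and the aperiodicity claims through that continued-fraction correspondence.
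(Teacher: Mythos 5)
First, note that the paper does not prove this statement at all: Theorem~\ref{thm:SturmArnoux} is imported as a known result from \cite{arnoux_fogg_2002}, with \cite{richomme_2021} cited only for the equivalence with the usual $S$-adic formulation. So your attempt must be judged against the classical proof. Your forward direction is essentially that classical argument and is sound in outline: balance excludes one of $00$, $11$; if $11$ is absent, $x$ factors as $L_0(x_1)$ or $R_0(x_1)$ according to its first letter; and your gap-counting mechanism correctly yields infinite alternation (if all $\sigma_n$ with $n \geq N$ had type $0$, every gap between consecutive $1$'s in $x_N$ would exceed every bound, so $x_N$ would have at most one $1$ and $x$ would be eventually periodic). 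The one step you should not wave through is ``$x_1$ is again Sturmian'': what you actually need is that elementary Sturmian morphisms \emph{reflect} Sturmianness (the preimage of a Sturmian word is Sturmian), which is not the same statement as ``Sturmian morphisms map Sturmian words to Sturmian words'' and requires its own proof that $x_1$ is balanced; this is standard but not free.

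The genuine gap is in the converse, and your proposed repair fails for a structural reason, not a technical one. You want balance of $x$ from ``$\sigma_{\llbracket 0,n \rrbracket}$ is a Sturmian morphism and maps balanced words to balanced words'', applied to $x = \sigma_{\llbracket 0,n \rrbracket}(x_{n+1})$. But the hypothesis tells you nothing about $x_{n+1}$: it is an arbitrary infinite word, not known to be balanced or aperiodic, and Sturmian morphisms do not regularize arbitrary input (for instance $L_0(1100\cdots) = 010100\cdots$, so images of unbalanced words stay unbalanced). Trying to first prove $x_{n+1}$ balanced is the original problem shifted up one level, i.e.\ the argument is circular. This is exactly why the classical proof routes through the continued-fraction / standard-pair machinery you relegate to a closing remark: infinite alternation forces the incidence matrices of $\sigma_{\llbracket 0,n \rrbracket}$ to be unbounded continued-fraction products, the images $\sigma_{\llbracket 0,n \rrbracket}(0)$ and $\sigma_{\llbracket 0,n \rrbracket}(1)$ then form standard pairs whose factors lie in the language of a Sturmian system of irrational slope, and every factor of $x$ is trapped in that language, giving balance and aperiodicity simultaneously; acknowledging that this is ``the main obstacle'' does not discharge it. Two smaller points: your aperiodicity sketch is salvageable, but you must actually exhibit the monotone quantity (under type-$0$ desubstitution a $p$-periodic tail becomes a tail of period $p$ minus the number of $1$'s per period, and a word whose tail is $0^\infty$ admits only type-$0$ desubstitutions thereafter, locking the type); and the appeal to the Cantor intersection of Proposition~\ref{prop:InfDesub} is out of place, since in this theorem there is no $\omega$-automaton --- the word $x$ and its desubstitutions $(x_n)$ are given by hypothesis.
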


This characterization is usually expressed in the $S$-adic framework, but is equivalent in this context \cite{richomme_2021}. 
In this section, we generalize Proposition~\ref{prop:InfDesubWalk} to Sturmian words and more general restrictions on the directive sequence.

\begin{proposition}
    \label{prop:SturmianWalk}
    The problem of the Sturmian walk is decidable:
   \problem{an $\omega$-automaton $\mathfrak{A}$.}{is there a Sturmian infinite word accepted by $\mathfrak{A}$?}
\end{proposition}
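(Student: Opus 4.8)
The plan is to reduce the Sturmian walk problem to a reachability/acceptance question on the meta-$\omega$-automaton $\mathcal{S}_{St}^{-\infty}(\mathfrak{A})$, where the alphabet is the four Sturmian morphisms. By Theorem~\ref{thm:SturmArnoux}, a word is Sturmian exactly when it is infinitely desubstitutable by a directive sequence in $\mathcal{S}_{St}$ that alternates infinitely in type, so by Proposition~\ref{prop:InfDesub} the existence of an accepted Sturmian word is equivalent to the existence of a non-nilpotent infinite walk in $\mathcal{S}_{St}^{-\infty}(\mathfrak{A})$ whose label alternates infinitely between the type-$0$ morphisms $\{L_0,R_0\}$ and the type-$1$ morphisms $\{L_1,R_1\}$. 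The plain infinite-walk criterion of Proposition~\ref{prop:InfDesubWalk} is not enough here, because it would also count walks that eventually use only one type; I must additionally enforce the infinite-alternation constraint.

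First I would restrict $\mathcal{S}_{St}^{-\infty}(\mathfrak{A})$ to its non-nilpotent part by deleting every state $\mathfrak{B}$ with $\mathcal{L}_\infty(\mathfrak{B}) = \emptyset$ (decidable by emptiness of $\omega$-automaton languages, as used in Proposition~\ref{prop:InfDesubWalk}) together with all incident edges; call the result $\mathcal{G}$. Any infinite walk in $\mathcal{G}$ is automatically non-nilpotent, so what remains is a purely graph-theoretic condition on the finite labeled graph $\mathcal{G}$: decide whether there is an infinite walk from the initial state $\mathfrak{A}$ whose edge labels alternate infinitely in type. This is a Büchi-style condition — the walk must visit a type-$0$ edge infinitely often \emph{and} a type-$1$ edge infinitely often — which is why the section is titled ``the Büchi infinitely desubstitutable walk''. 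Here I would phrase the requirement as the existence of a reachable cycle in $\mathcal{G}$ that uses at least one edge labeled by an element of $\{L_0,R_0\}$ and at least one edge labeled by an element of $\{L_1,R_1\}$: an eventually-periodic walk following such a cycle alternates infinitely in type, and conversely any infinitely-alternating walk, living in the finite graph $\mathcal{G}$, must have a recurrent strongly connected component whose edges contain both types, hence contains such a mixed cycle.

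The concrete algorithm is then: build $\mathcal{G}$; compute the strongly connected components reachable from $\mathfrak{A}$; and accept iff some reachable SCC contains at least one $\{L_0,R_0\}$-edge and at least one $\{L_1,R_1\}$-edge (an SCC consisting of a single state with no self-loop of the right kind being handled by treating only genuine edges within the component). All of these are standard decidable graph operations on the finite, computable automaton $\mathcal{G}$, so the problem is decidable. I would streamline the presentation by stating and proving a general lemma for \emph{Büchi} constraints on directive sequences — given any $\omega$-regular constraint on the label sequence, intersect the constraint automaton with $\mathcal{G}$ (a product construction) and test for a reachable accepting loop — and then instantiate it with the infinite-alternation constraint to obtain the Sturmian case; this is presumably what Proposition~\ref{prop:SturmianWalk} is meant to illustrate.

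The main obstacle, and the step requiring the most care, is the correctness of reducing the infinitely-alternating (hence genuinely Büchi, not merely safety) condition to a reachable mixed cycle. I must argue precisely that passing to the non-nilpotent subgraph $\mathcal{G}$ loses no Sturmian witness — this relies on every finite prefix of a valid directive sequence landing in a non-nilpotent state, which follows from Proposition~\ref{prop:InfDesub} — and that in a \emph{finite} graph an infinite walk alternating infinitely in type is equivalent to a reachable cycle carrying both types. The forward direction of this equivalence uses the pigeonhole/König argument that an infinite walk in a finite graph returns to some SCC infinitely often and, since both types occur infinitely often, that recurrent SCC must emit both types; the backward direction is immediate by looping. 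I would also verify the mild edge case that the product with the type-alternation monitor does not spuriously create or destroy non-nilpotency, which is clear since the monitor only reads labels and never constrains the underlying $\omega$-automaton states.
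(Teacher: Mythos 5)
Your proposal is correct and takes essentially the same route as the paper: reduce via Theorem~\ref{thm:SturmArnoux} and Proposition~\ref{prop:InfDesub} to the existence of an infinitely-type-alternating walk in $\mathcal{S}_{St}^{-\infty}(\mathfrak{A})$, then decide this by a strongly-connected-component analysis looking for a component containing both a $\{L_0,R_0\}$-edge and a $\{L_1,R_1\}$-edge. If anything, your write-up is more careful than the paper's own terse proof, which does not explicitly restrict to the non-nilpotent part (deleting states with empty language) nor insist that the mixed SCC be reachable from the initial state $\mathfrak{A}$ --- both of which are needed for correctness and which you handle.
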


\begin{proof}
    Consider the associated representation automaton $\mathcal{S}_{St}^{-\infty}(\mathfrak{A})$.
    According to Proposition~\ref{prop:InfDesub} combined with Theorem~\ref{thm:SturmArnoux},  there is a Sturmian infinite word accepted by $\mathfrak{A}$ if, and only if, there is an infinite computation accepted by $\mathcal{S}_{St}^{-\infty}(\mathfrak{A})$ labeled by a word $(\sigma_n)_{n \in \mathbb{N}}$ which alternates infinitely in type.
    This last condition is decidable: compute the strong connected components of $\mathfrak{A}$, and check that there is at least one strongly connected component $C$ which contains two edges labeled by substitutions in $\{L_0, R_0\}$ and $\{L_1, R_1\}$, respectively.    
\end{proof}

In this case, the condition of alternating infinitely in type is easy to check: it can actually be described using a Büchi $\omega$-automaton on the alphabet $\mathcal S$. Proposition~\ref{prop:SturmianWalk} generalizes to every such condition.

\begin{definition}
    Let $\mathcal{S}$ be a set of substitutions, and $\mathfrak{R}$ a Büchi $\omega$-automaton on the alphabet $\mathcal{S}$.
    Define $X_{\mathfrak{R}}$ as $\{x \in \mathcal{A}^\mathbb{N}\ |\ \exists (\sigma_n)_{n \in \mathbb{N}} \in \mathcal{L}_\infty(\mathfrak{R}), x \text{ is inf. desub. by } (\sigma_n) \}$.
\end{definition}

\begin{proposition}
    \label{prop:Generalisation1}
    The following problem is decidable:
    \problem{an $\omega$-automaton $\mathfrak{A}$, a finite set of substitutions $\mathcal{S}$, a Büchi $\omega$-automaton $\mathfrak{R}$ on the alphabet $\mathcal{S}$}{is there an infinite word of $X_\mathfrak{R}$ accepted by $\mathfrak{A}$?}
\end{proposition}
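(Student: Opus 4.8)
The plan is to reduce the question to an emptiness test for the intersection of two $\omega$-regular languages over the alphabet $\mathcal{S}$. Combining the definition of $X_{\mathfrak{R}}$ with Proposition~\ref{prop:InfDesub}, there is a word of $X_{\mathfrak{R}}$ accepted by $\mathfrak{A}$ if, and only if, there is a directive sequence $(\sigma_n)_{n \in \mathbb{N}} \in \mathcal{L}_\infty(\mathfrak{R})$ that labels a non-nilpotent infinite walk in $\mathcal{S}^{-\infty}(\mathfrak{A})$. So the entire problem is to decide, among the sequences labeling non-nilpotent walks in the meta-$\omega$-automaton, whether at least one also lies in $\mathcal{L}_\infty(\mathfrak{R})$. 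This is the natural generalization of the strongly-connected-component argument of Proposition~\ref{prop:SturmianWalk}: there the constraint was simple enough to inspect directly, whereas here it is an arbitrary Büchi condition and we handle it with a product construction.

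First I would encode non-nilpotency as an $\omega$-regular condition. By Proposition~\ref{prop:DesubAuto}, if $\mathcal{L}_\infty(\mathfrak{B}) = \emptyset$ then $\mathcal{L}_\infty(\sigma^{-1}(\mathfrak{B})) = \sigma^{-1}(\mathcal{L}_\infty(\mathfrak{B})) = \emptyset$ for every $\sigma \in \mathcal{S}$, so emptiness propagates forward along the meta-automaton: a walk visits an empty state at some step if, and only if, all its later states are empty. Hence a walk is non-nilpotent exactly when it never visits an empty state. Since the emptiness of $\mathcal{L}_\infty(\mathfrak{B})$ is decidable for each of the finitely many $\mathfrak{B} \in D(\mathfrak{A})$, I can effectively restrict $\mathcal{S}^{-\infty}(\mathfrak{A})$ to the states with nonempty language; call the resulting finite automaton on $\mathcal{S}$ by $\mathfrak{M}$. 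The non-nilpotent walks of $\mathcal{S}^{-\infty}(\mathfrak{A})$ are precisely the infinite walks of $\mathfrak{M}$, and under the weak acceptance convention of the paper the sequences labeling them form exactly $\mathcal{L}_\infty(\mathfrak{M})$.

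It then remains to decide whether $\mathcal{L}_\infty(\mathfrak{M}) \cap \mathcal{L}_\infty(\mathfrak{R}) \neq \emptyset$. Both are $\omega$-regular languages on $\mathcal{S}$: $\mathcal{L}_\infty(\mathfrak{R})$ by hypothesis, and $\mathcal{L}_\infty(\mathfrak{M})$ because $\mathfrak{M}$ is a finite automaton every infinite walk of which accepts (a trivial Büchi condition). I would therefore form the synchronous product of $\mathfrak{M}$ and $\mathfrak{R}$ over $\mathcal{S}$, with initial state $(\mathfrak{A}, i_{\mathfrak{R}})$ and a transition $(\mathfrak{B}, r) \xrightarrow{\sigma} (\sigma^{-1}(\mathfrak{B}), r')$ whenever $\sigma^{-1}(\mathfrak{B})$ is nonempty and $r \xrightarrow{\sigma} r'$ in $\mathfrak{R}$, keeping as Büchi acceptance condition that the $\mathfrak{R}$-component visits an accepting state infinitely often. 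Because $\mathfrak{M}$ imposes only the safety constraint of staying among nonempty states and no infinitary acceptance condition, this product is a Büchi automaton recognizing the intersection, and the standard emptiness test for Büchi automata (search for a reachable accepting cycle) decides whether its language is nonempty. The decision problem answers ``yes'' exactly when it is.

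The only genuinely delicate point is the Büchi product and its emptiness test, both of which are classical; everything combinatorial about the words themselves has already been absorbed into $\mathcal{S}^{-\infty}(\mathfrak{A})$ by Proposition~\ref{prop:InfDesub}, so no new argument on the words is required.
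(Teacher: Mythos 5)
Your proof is correct and follows essentially the same route as the paper: reduce the question to emptiness of $\mathcal{L}_\infty(\mathfrak{R}) \cap \mathcal{L}_\infty(\mathcal{S}^{-\infty}(\mathfrak{A}))$ via a product Büchi automaton and the standard emptiness test. You are in fact slightly more careful than the paper's one-line proof, which takes the intersection without explicitly pruning the states of $\mathcal{S}^{-\infty}(\mathfrak{A})$ with empty language; your observation that emptiness propagates forward along desubstitution (so non-nilpotent walks are exactly those staying in nonempty states, and pruning is sound) makes that implicit step of Proposition~\ref{prop:InfDesubWalk} explicit.
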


\begin{proof}
    The question of the problem is equivalent to: is $\mathcal{L}_\infty(\mathfrak{R}) \cap \mathcal{L}_\infty(\mathcal{S}^{-\infty}(\mathfrak{A})) \neq \emptyset$?
    The intersection between a Büchi $\omega$-automaton and an $\omega$-automaton is a Büchi $\omega$-automaton that can be effectively constructed \cite{perrin_pin_2004}, and checking the non-emptiness of a Büchi $\omega$-automaton is decidable.
\end{proof}

The interest of Proposition~\ref{prop:Generalisation1} is that there exists a zoology of families of words which have a characterization by infinite desubstitution.
For instance, Proposition~\ref{prop:Generalisation1} applies to Arnoux-Rauzy words \cite{arnoux_rauzy_1991} and to minimal dendric ternary words \cite{gheeraert_lejeune_leroy_2021}.
We also characterize the set of allowed directive sequences akin to Corollary~\ref{prop:SAdicLanguage}: the set of directive sequences on $\mathcal{S}$ accepted by the Büchi $\omega$-automaton $\mathfrak R$ that define a word accepted by $\mathfrak A$ is itself recognized by a Büchi $\omega$-automaton.

Let us translate Proposition~\ref{prop:Generalisation1} in more dynamical terms:

\begin{proposition}
The following problem is decidable:

\problem{a set of substitutions $\mathcal S$, a Büchi $\omega$-automaton $\mathfrak{R}$ on the alphabet $\mathcal S$ and a sofic shift $\mathbb{S}$.}
{Is $\mathbb{S} \cap X_\mathfrak{R}$ empty?}
\end{proposition}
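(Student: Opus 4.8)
The plan is to observe that this statement is nothing more than a dynamical rephrasing of Proposition~\ref{prop:Generalisation1}, so I would prove it by exhibiting the underlying reduction explicitly. The bridge is the characterization recalled in Section~2: a sofic shift is precisely the infinite-word language $\mathcal{L}_\infty(\mathfrak{A})$ of an $\omega$-automaton $\mathfrak{A}$ all of whose states are initial ($I = Q$).

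Concretely, I would first take the $\omega$-automaton $\mathfrak{A} = (\mathcal{A}, Q, Q, T)$ presenting $\mathbb{S}$, so that $\mathcal{L}_\infty(\mathfrak{A}) = \mathbb{S}$ (if $\mathbb{S}$ is supplied under a different effective encoding, one first computes such an $\mathfrak{A}$, which is routine). Then I would note the set equality
\[
\mathbb{S} \cap X_{\mathfrak{R}} = \mathcal{L}_\infty(\mathfrak{A}) \cap X_{\mathfrak{R}},
\]
so that $\mathbb{S} \cap X_{\mathfrak{R}}$ is empty if and only if no infinite word of $X_{\mathfrak{R}}$ is accepted by $\mathfrak{A}$. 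The latter is exactly the negation of the question answered by Proposition~\ref{prop:Generalisation1} on the inputs $\mathfrak{A}$, $\mathcal{S}$ and $\mathfrak{R}$; since a decision procedure settles a property and its negation alike, emptiness of $\mathbb{S} \cap X_{\mathfrak{R}}$ is decidable.

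I do not expect a genuine obstacle here: the whole content is the translation between the automata-theoretic and the dynamical formulations, and all the real work has already been carried out in Proposition~\ref{prop:Generalisation1} (itself resting on the intersection of a Büchi $\omega$-automaton with $\mathcal{S}^{-\infty}(\mathfrak{A})$). The only point deserving a line of justification is that the input sofic shift can be assumed given as, or effectively turned into, an $\omega$-automaton with $I = Q$ whose infinite-word language is $\mathbb{S}$ — which is exactly the notion of soficity used in the paper.
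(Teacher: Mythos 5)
Your proof is correct and matches the paper's intent exactly: the paper states this proposition as a direct ``dynamical translation'' of Proposition~\ref{prop:Generalisation1} with no further argument, and your write-up simply makes that translation explicit via the paper's own definition of a sofic shift as $\mathcal{L}_\infty(\mathfrak{A})$ for an $\omega$-automaton with $I = Q$. No gap; the reduction and the appeal to decidability of the complement are exactly what is intended.
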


\subsection{Application to the coding of Sturmian words}

Here is an example of a natural question from combinatorics on words that we solve on Sturmian words, even though the method generalizes easily.
Let $W$ be a finite set of finite words on $\{0,1\}$.
Consider $W^\omega$ the set of infinite concatenations of elements of $W$, i.e. $W^\omega = \{x \in \{0,1\}^\mathbb{N}\ |\ \exists (w_n)_{n \in \mathbb{N}} \subseteq W, x = \lim\limits_{n \rightarrow \infty} w_0 w_1 \dots w_n\}$.

\begin{proposition}
    The following problem is decidable:
    \problem{$W$ a finite set of words on \{0,1\}}{does $W^\omega$ contain a Sturmian word?}
\end{proposition}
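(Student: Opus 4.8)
The plan is to reduce to the already-decided problem of the Sturmian walk (Proposition~\ref{prop:SturmianWalk}). The key observation is that $W^\omega$ is the language of infinite words of an $\omega$-automaton that can be built effectively from $W$. Once we have an $\omega$-automaton $\mathfrak{A}_W$ with $\mathcal{L}_\infty(\mathfrak{A}_W) = W^\omega$, the question becomes exactly ``does $\mathfrak{A}_W$ accept a Sturmian word?'', which is decidable by Proposition~\ref{prop:SturmianWalk}.

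First I would normalize $W$. The empty word contributes nothing to an infinite concatenation, so I may discard it; if the resulting set is empty then $W^\omega = \emptyset$ and the answer is trivially \emph{no}. So assume every $w \in W$ is nonempty. Next I would construct $\mathfrak{A}_W$ around a single \emph{hub} state $q_0$, which is the only initial state. For each $w = a_1 a_2 \cdots a_\ell \in W$ with $\ell \geq 1$, I add fresh intermediate states $p^w_1, \dots, p^w_{\ell-1}$ together with the transitions $q_0 \xrightarrow[]{a_1} p^w_1 \xrightarrow[]{a_2} \cdots \xrightarrow[]{a_{\ell-1}} p^w_{\ell-1} \xrightarrow[]{a_\ell} q_0$; when $\ell = 1$ this degenerates to the single loop $q_0 \xrightarrow[]{a_1} q_0$. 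Each word of $W$ thus becomes a finite ``excursion'' that leaves $q_0$ and returns to it.

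Then I would verify $\mathcal{L}_\infty(\mathfrak{A}_W) = W^\omega$. The intermediate-state gadgets are acyclic: the only transition leaving an intermediate state $p^w_i$ continues along its own chain, and each chain is finite and ends at $q_0$. Consequently every infinite walk starting at $q_0$ must visit $q_0$ infinitely often, so it decomposes into a sequence of complete excursions, each reading one word of $W$, and its label lies in $W^\omega$. Conversely, every element of $W^\omega$ is read by concatenating the excursions associated to its chosen factorization. Nondeterminism arising when several words of $W$ share a prefix is harmless, since the $\omega$-automata of this paper allow nondeterminism and $\mathcal{L}_\infty$ is the union over all walks. Finally, I apply Proposition~\ref{prop:SturmianWalk} to $\mathfrak{A}_W$: since ``accepted by $\mathfrak{A}_W$'' means belonging to $\mathcal{L}_\infty(\mathfrak{A}_W) = W^\omega$, this decides whether $W^\omega$ contains a Sturmian word.

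The step requiring the most care is the identity $\mathcal{L}_\infty(\mathfrak{A}_W) = W^\omega$, and specifically the claim that every infinite accepting walk returns to the hub infinitely often rather than ``escaping'' permanently into the intermediate states. This is exactly where the acyclicity of the gadgets is used, and it is the one place the construction could break if transitions were added carelessly (for instance by reusing intermediate states across different words in a way that creates spurious cycles). Everything else is a routine effective construction.
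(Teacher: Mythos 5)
Your proof is correct and takes essentially the same approach as the paper: reduce to the Sturmian walk problem (Proposition~\ref{prop:SturmianWalk}) via an $\omega$-automaton $\mathfrak{A}_W$ with $\mathcal{L}_\infty(\mathfrak{A}_W) = W^\omega$. The paper simply asserts that $W^\omega$ is $\omega$-regular and leaves the construction implicit, whereas you make the hub-and-excursions automaton explicit and verify its correctness, including the edge cases for the empty word; this is the same argument, spelled out in more detail.
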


\begin{proof}
    The language $W^\omega$ is $\omega$-regular: there is an $\omega$-automaton $\mathfrak{A}_W$ such that $\mathcal{L}_\infty(\mathfrak{A}_W) = W^\omega$.
    Then, $W^\omega$ contains a Sturmian word iff $\mathfrak{A}_W$ accepts a Sturmian word, which is decidable by Proposition~\ref{prop:SturmianWalk}.
\end{proof}

\section{About $\omega$-automata recognizing Sturmian words}

In this Section, we focus on Sturmian words and show that the language of Sturmian words is as far as possible from being regular, in the sense that an $\omega$-automaton may only accept a Sturmian word if it accepts the image of the full shift under a Sturmian morphism.

\begin{theorem}
\label{prop:SturmianOmega}
    Let $\mathcal{S} = \mathcal{S}_{St}$ be the set of elementary Sturmian morphisms as defined earlier, and let $\mathfrak{A}$ be an $\omega$-automaton.
    If $\mathfrak{A}$ accepts a Sturmian word, then $\exists \sigma \in \mathcal{S}_{St}^\ast, \sigma(\mathcal{A}^\mathbb{N}) \subseteq \mathcal{L}_\infty(\mathfrak{A})$.
\end{theorem}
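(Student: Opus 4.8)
The plan is to reduce the statement to a \emph{universality} property of a single desubstituted automaton. First observe that $\sigma(\mathcal{A}^{\mathbb N}) \subseteq \mathcal{L}_\infty(\mathfrak{A})$ holds exactly when every $y \in \mathcal{A}^{\mathbb N}$ satisfies $\sigma(y) \in \mathcal{L}_\infty(\mathfrak{A})$, so Proposition~\ref{prop:DesubAuto} rewrites the goal as
\[
\sigma(\mathcal{A}^{\mathbb N}) \subseteq \mathcal{L}_\infty(\mathfrak{A}) \iff \mathcal{L}_\infty(\sigma^{-1}(\mathfrak{A})) = \mathcal{A}^{\mathbb N}.
\]
Hence it suffices to exhibit $\sigma \in \mathcal{S}_{St}^\ast$ for which $\sigma^{-1}(\mathfrak{A})$ is \emph{universal}, i.e. accepts every infinite word; equivalently, for which some initial state of $\sigma^{-1}(\mathfrak{A})$ can read every infinite word.

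Second, I would locate a \emph{primitive} return loop in the meta-$\omega$-automaton. By Proposition~\ref{prop:InfDesub} and Theorem~\ref{thm:SturmArnoux}, the accepted Sturmian word yields a non-nilpotent infinite walk in $\mathcal{S}_{St}^{-\infty}(\mathfrak{A})$ whose label alternates infinitely in type. Since $D(\mathfrak{A})$ is finite (Fact~\ref{fct:DesubFinite}), this walk eventually remains in one strongly connected component $C$, all of whose states have nonempty language; and, by infinite alternation, $C$ carries edges labelled by morphisms of both families $\{L_0,R_0\}$ and $\{L_1,R_1\}$. Strong connectivity then produces a state $\mathfrak{B} = \rho^{-1}(\mathfrak{A}) \in C$ and a loop at $\mathfrak{B}$ whose label $\tau \in \mathcal{S}_{St}^\ast$ contains at least one morphism of each type. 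The incidence matrices of the two families are $\bigl(\begin{smallmatrix}1&1\\0&1\end{smallmatrix}\bigr)$ and $\bigl(\begin{smallmatrix}1&0\\1&1\end{smallmatrix}\bigr)$, so any product containing one of each is strictly positive: $\tau$ is \emph{primitive}, with $|\tau(a)| \geq 2$ and both letters occurring in each $\tau(a)$. By construction $\tau^{-1}(\mathfrak{B}) = \mathfrak{B}$ with $\mathcal{L}_\infty(\mathfrak{B}) \neq \emptyset$, and $\mathfrak{B}$ accepts the corresponding desubstituted Sturmian word, which is aperiodic (not eventually periodic).

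Third, I would record the structural consequences of this loop. The language $\mathcal{L}_\infty(\mathfrak{B})$, and more generally $L(q) := \mathcal{L}_\infty(\mathfrak{B}_q)$ where $\mathfrak{B}_q$ is $\mathfrak{B}$ with single initial state $q$, is topologically closed, being the set of labels of infinite walks in a finite graph. Since $\tau^{-1}(\mathfrak{B}) = \mathfrak{B}$ only constrains transitions, we get $\tau^{-1}(\mathfrak{B}_q) = \mathfrak{B}_q$ for every $q$, and Proposition~\ref{prop:DesubAuto} then makes each $L(q)$ \emph{$\tau$-invariant}: $\tau^{-1}(L(q)) = L(q)$. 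The whole theorem now reduces to the claim that $\mathfrak{B}$ is universal, after which $\sigma = \rho$ settles the statement through the equivalence above.

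Finally, the crux is to prove that a nonempty, closed, $\omega$-regular, $\tau$-invariant language, with $\tau$ primitive Sturmian and containing an aperiodic word, must be the full shift. Aperiodicity rules out the trivial case: a finite automaton accepting only eventually periodic words would have a unique infinite path through its recurrent part, whereas $\mathfrak{B}$ accepts an aperiodic word; so genuine branching occurs at some reachable state. The remaining and main difficulty is to \emph{propagate} branching to every allowed prefix: taking a minimal forbidden prefix $w' a$ (so $w'\bar a$ is allowed but $w' a$ is not), one would use that both letters occur as internal factors of every $\tau$-image, together with $\tau^{-1}(L) = L$, to force $w' a$ to be extendable after all, a contradiction; hence no prefix is forbidden and $L = \mathcal{A}^{\mathbb N}$. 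I expect this propagation --- the full-shift rigidity of closed $\tau$-invariant regular languages under a primitive $\tau$ --- to be the hard step, since it is precisely what forces $\mathfrak{B}$ to accept non-Sturmian words as well. Should a direct argument prove elusive, a fallback is to extract from the branching two incomparable return loops $c_1 \neq c_2$ at a recurrent state, realize $\{c_1,c_2\}^\omega$ as an image $\theta(\mathcal{A}^{\mathbb N})$ of a Sturmian morphism via the two–return–word rigidity of Sturmian factors, and absorb the access path into $\theta$ by further desubstitution along $C$.
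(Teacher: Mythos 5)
Your steps 1--3 are correct, and they essentially reproduce the paper's own setup: the statement is equivalent to finding a total automaton in $\mathcal{S}_{St}^{-\infty}(\mathfrak{A})$, and extracting, by pigeonhole inside a strongly connected component, an automaton $\mathfrak{B}=\rho^{-1}(\mathfrak{A})$ with a loop $\tau^{-1}(\mathfrak{B})=\mathfrak{B}$ whose label mixes both types is legitimate. The gap is your step 4, which you yourself flag as unproven --- and it is not merely unproven: the rigidity claim you reduce to is \emph{false}, precisely because your loop label $\tau$ is allowed to contain the right morphisms $R_0,R_1$. Concretely, let $\mathfrak{B}_0$ be the two-state automaton with states $\{i,f\}$, $i$ initial, and transitions $i \xrightarrow{0} f$, $f \xrightarrow{0} f$, $f \xrightarrow{1} f$, so that $\mathcal{L}_\infty(\mathfrak{B}_0)=0\{0,1\}^{\mathbb N}$. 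Since $R_0$ and $R_1$ map each letter to a word beginning with that same letter, a direct check gives $R_0^{-1}(\mathfrak{B}_0)=R_1^{-1}(\mathfrak{B}_0)=\mathfrak{B}_0$, hence $\tau^{-1}(\mathfrak{B}_0)=\mathfrak{B}_0$ for $\tau=R_0\circ R_1$ ($\tau(0)=010$, $\tau(1)=10$), which is primitive and contains one morphism of each type. The language $0\{0,1\}^{\mathbb N}$ is nonempty, closed, $\omega$-regular, $\tau$-invariant, and contains aperiodic (indeed Sturmian) words, e.g.\ the Fibonacci word or the fixed point $\lim_n \tau^n(0)$, whose directive sequence $(R_0,R_1,R_0,R_1,\dots)$ alternates infinitely in type --- yet $\mathfrak{B}_0$ is not total. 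Worse, running your own construction on $\mathfrak{A}=\mathfrak{B}_0$ with that accepted Sturmian word yields exactly $\mathfrak{B}=\mathfrak{B}_0$ and the loop $\tau=R_0\circ R_1$, so the statement ``this $\mathfrak{B}$ is universal'' that your plan targets is genuinely false; no propagation argument can rescue it. (The paper's appendix example $\sigma_H$ makes the same point for raw primitivity on three letters.) The theorem itself survives on this example, but with $\sigma=L_0$ (every $L_0$-image starts with $0$), not with $\sigma=\rho$.

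What is missing is exactly the paper's key maneuver: controlling \emph{which} invariant automaton and loop one lands on. The paper first passes to the orbit closure (the ``forget'' transformation, making all states initial) so as to replace the accepted Sturmian word $x$ by its characteristic word $\chi(x)$; by a result of Berth\'e--Holton--Zamboni, the directive sequence of a characteristic word lies in $\{L_0,L_1\}^{\mathbb N}$, so the pigeonhole loop satisfies $\tau\in\{L_0,L_1\}^\ast\setminus(L_0^\ast\cup L_1^\ast)$ and, after a cyclic rotation (i.e.\ a further desubstitution), may be assumed to begin with $L_0$ and end with $L_1$. For such $\tau$ one has the two facts your sketch never uses and cannot derive from primitivity: $\tau(1)\sqsubseteq_p \tau(0)$, and both images begin with $0$. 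These prefix relations are what allow, at every state of a $\tau$-invariant automaton, a $0$-transition with infinite continuation to be exchanged against a $1$-transition and conversely (property $(H)$, Lemma~\ref{lem:TechnicalLemma}); property $(H)$ at all states plus nonemptiness then forces totality. My counterexample shows this left-type prefix structure is not a convenience but the crux: with right morphisms only suffix relations hold, and totality genuinely fails. Your fallback via pairs of return loops points in a related direction but is undeveloped and would still need to manufacture this left-type structure, so as written the proposal omits the essential content of the theorem.
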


This is equivalent to the presence of a total automaton in $\mathcal{S}^{- \infty}(\mathfrak A)$: an $\omega$-automaton $\mathfrak A$ is total if $\mathcal{L}_\infty(\mathfrak A) = \mathcal{A}^\mathbb{N}$.
Totality is a stable property under any desubstitution.

To prove Theorem~\ref{prop:SturmianOmega}, we introduce the following technical tools.

\begin{definition}\label{def:PropH}
    Let $\mathfrak{A}$ be an $\omega$-automaton  on $\mathcal{A} = \{0,1\}$.
    A state $q$ of $\mathfrak A$ has property $(H)$ if $(\exists q_s, q \xrightarrow[]{0} \transo{q_s}{}{\cdots} \in \mathfrak A) \Leftrightarrow (\exists q_t, q \xrightarrow[]{1} \transo{q_t}{}{\cdots} \in \mathfrak A)$, where $\transo{q_s}{}{\cdots}$ means that there is an infinite computation starting from $q_t$ in $\mathfrak{A}$.
\end{definition}

If all states of $\mathfrak A$ have property (H), there are two possibilities: if there is no infinite computation starting on an initial state, the infinite language of $\mathfrak{A}$ is empty; otherwise, $\mathfrak{A}$ is total.

    \begin{lemma}
    \label{lem:TechnicalLemma}
      Let $\mathfrak C$ be an $\omega$-automaton, and $\phi \in \mathcal{S}_{St}^\ast$ starting with $L_0$ and ending with $L_1$ such that $\phi^{-1}(\mathfrak{C}) = \mathfrak{C}$. Then, every state of $\mathfrak{C}$ has property $(H)$.
    \end{lemma}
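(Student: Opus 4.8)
The engine of the proof is the self-similarity hypothesis $\phi^{-1}(\mathfrak{C}) = \mathfrak{C}$, read at the level of individual states, together with two elementary combinatorial facts about $\phi$. Writing the hypotheses as $\phi = L_0 \circ \psi \circ L_1$ for some $\psi \in \mathcal{S}_{St}^\ast$ (leftmost/outermost factor $L_0$, rightmost/innermost factor $L_1$), I would first record: (i) both $\phi(0)$ and $\phi(1)$ begin with the letter $0$, because $L_0(0)=0$ and $L_0(1)=01$ both start with $0$, so $L_0(w)$ starts with $0$ for any nonempty $w$ and in particular $\phi(a) = L_0(\psi(L_1(a)))$ does; and (ii) $\phi(1)$ is a proper prefix of $\phi(0)$, because $L_1(0)=10=L_1(1)\cdot 0$ gives $\phi(0) = L_0\psi(1)\,L_0\psi(0) = \phi(1)\cdot L_0\psi(0)$ with $L_0\psi(0)\neq\varepsilon$, i.e. $\phi(1)\sqsubseteq_p\phi(0)$. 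These two facts are exactly where ``starts with $L_0$'' and ``ends with $L_1$'' enter.

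Next I would reduce property $(H)$ at a state $q$ to a single cleaner claim: \emph{if there is any infinite computation starting at $q$, then there is one whose first letter is $0$ and one whose first letter is $1$}. Indeed, each side of the equivalence in Definition~\ref{def:PropH} trivially implies the existence of \emph{some} infinite computation from $q$; so if no such computation exists both sides fail and $(H)$ holds vacuously, while if one exists the claim supplies both a $0$-starting and a $1$-starting computation and both sides hold. The $0$-starting direction is immediate: given an infinite computation from $q$ labelled by some $u$, the self-similarity $\mathfrak{C}=\phi^{-1}(\mathfrak{C})$ together with the forward (concatenation) direction of Proposition~\ref{prop:DesubAuto}, applied from the state $q$ rather than from an initial state, shows that $\phi(u)$ also labels an infinite computation from $q$; by fact (i) this word begins with the letter $0$, so its first edge is the desired $q\xrightarrow{0}\cdots$.

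The $1$-starting direction is the crux. Starting again from the infinite computation from $q$ labelled $\phi(u)$, I use fact (ii): since $\phi(u)$ begins with $\phi(u_0)$ and $\phi(1)\sqsubseteq_p\phi(a)$ for both $a\in\{0,1\}$, the finite word $\phi(1)$ is a prefix of $\phi(u)$. Following the specific walk realizing this computation, after reading its first $|\phi(1)|$ edges we land at a well-defined state $q'$, and the same walk continues infinitely from $q'$; hence there is an infinite computation starting at $q'$, and $\trans{q}{\phi(1)}{q'}$ is a finite computation in $\mathfrak{C}$. By Proposition~\ref{WordSub} (with the homomorphism $\phi$), $\trans{q}{\phi(1)}{q'}$ in $\mathfrak{C}$ is equivalent to the single edge $q\xrightarrow{1}q'$ in $\phi^{-1}(\mathfrak{C})=\mathfrak{C}$. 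Thus $q\xrightarrow{1}\transo{q'}{}{\cdots}$ is an infinite computation from $q$ beginning with $1$, which completes the claim and hence property $(H)$ at the arbitrary state $q$.

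The only genuinely delicate point is the passage from the ``accepted'' formulation of Proposition~\ref{prop:DesubAuto} to its per-state version used above; this is routine, since the forward direction of that proposition is purely a concatenation of the finite computations $\trans{q_i}{\phi(u_i)}{q_{i+1}}$ and never uses that $q_0$ is initial. Everything else reduces to the two morphism computations (i)–(ii) and to Proposition~\ref{WordSub}, so I expect no real obstacle beyond stating this lifting carefully.
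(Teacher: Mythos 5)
Your proof is correct and uses essentially the same mechanism as the paper's: the two prefix facts forced by $\phi$ starting with $L_0$ and ending with $L_1$ (namely that $\phi(1)$ begins with $0$ and that $\phi(1) \sqsubseteq_p \phi(0)$), combined with Proposition~\ref{WordSub} and the self-similarity $\phi^{-1}(\mathfrak{C}) = \mathfrak{C}$ to trade single edges for $\phi$-labelled computations and back. The only difference is presentational: the paper converts a $0$-starting infinite computation into a $1$-starting one (and conversely) by substituting just the first edge, whereas you substitute the whole infinite label $u$ into $\phi(u)$ and read both witnesses off that single computation --- a harmless repackaging of the same argument.
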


    \begin{proof}[of Lemma \ref{lem:TechnicalLemma}]
        Let $\mathfrak C = (\{0,1\}, Q_\mathfrak{C}, I_\mathfrak{C}, T_\mathfrak{C})$, and $q \in Q_\mathfrak{C}$.
        First, suppose that $q \xrightarrow[]{0} \transo{q_t}{}{\cdots}$ is a computation in $\mathfrak C$.
        Then $q \xrightarrow[]{0} q_t$ is also a transition of $\phi^{-1}(\mathfrak{C})$.
        So $\trans{q}{\phi(0)}{q_t}$ is a computation in $\mathfrak{C}$.
        Because $\phi$ ends with $L_1$, $\phi(1) \sqsubseteq_p \phi(0)$.
        So $\trans{q}{\phi(1)}{q_u} \trans{}{m}{q_t} \transo{}{}{\cdots}$ is a computation in $\mathfrak C$, with some $q_u \in Q_\mathfrak{C}$ and $\phi(0) = \phi(1)m$.
        Now, using $\mathfrak C = \phi^{-1}(\mathfrak C)$, $q \xrightarrow[]{1} q_u \trans{}{m}{q_t} \transo{}{}{\cdots}$ is a computation in $\mathfrak C$.

        Conversely, if $q \xrightarrow[]{1} \transo{q_t}{}{\cdots}$ is a computation in $\mathfrak C = \phi^{-1}(\mathfrak C)$, there is also $\trans{q}{\phi(1)}{} \transo{q_t}{}{\cdots}$
        Because $\phi$ begins with $L_0$, $\phi(1) = 0m$ for some finite $m$.
        So the last computation can be written $q \xrightarrow[]{0} \trans{q_u}{m}{q_t} \transo{}{}{\cdots}$
    \end{proof}

\begin{proof}[of Theorem~\ref{prop:SturmianOmega}]
    Let $x$ be a Sturmian word accepted by $\mathfrak{A}$.
    Consider the transformation of $\omega$-automata $\text{forget} : (\mathcal{A}, Q,I,T) \mapsto (\mathcal{A}, Q,Q,T)$ which makes all states initial.
    Then, $\text{forget}(\mathfrak{A})$ also accepts $x$, and $\mathcal{L}_\infty(\text{forget}(\mathfrak A))$ is a sofic shift.
    Then $\overline{\bigcup\limits_{n \geq 0} S^n(x)}$, which is the orbit of $x$ under the shift $S$, is contained in $\mathcal{L}_\infty(\text{forget}(\mathfrak{A}))$.
    Let $\chi(x)$ be the Sturmian characteristic word associated with  $x$ (see \cite{perrin_restivo_2012}): it belongs to the orbit of $x$, so it is accepted by $\text{forget}(\mathfrak{A})$.
    Then, $\chi(x) = \lim\limits_{n \rightarrow \infty} \sigma_0 \circ \dots \circ \sigma_n(a_n)$ with $(\sigma_n)_{n \in \mathbb{N}} \subseteq \mathcal{S}_{St}$ a sequence that alternates infinitely in type (see Theorem~\ref{thm:SturmArnoux}).
    Besides, because $\chi(x)$ is a characteristic word, it represents the orbit of zero from the point of view of circle rotation (see \cite{perrin_restivo_2012}): when combined with Proposition~2.7 of \cite{berthe_holton_zamboni_2006}, it yields that
    $(\sigma_n)_{n \in \mathbb{N}} \subseteq \{L_0, L_1\}^\mathbb{N}$.
    By the pigeonhole principle, there is an $\omega$-automaton $\mathfrak{B}$ that appears infinitely often in the sequence $(\sigma_{\llbracket0,n\rrbracket}^{-1}(\text{forget}(\mathfrak{A})))_{n \in \mathbb{N}} \subseteq \mathfrak{S}(\text{forget}(\mathfrak{A}))$.
    Thus, we can find a substitution $\tau$ such that $\mathfrak{B} = \tau^{-1}(\mathfrak{B})$ and $\tau \in \{L_0,L_1\}^\ast \setminus (L_0^\ast \cup L_1^\ast)$.
    Because $\tau$ contains both $L_0$ and $L_1$, there are two cases:
        \begin{enumerate}[topsep = 0pt]
            \item $L_1 L_0 \sqsubseteq_f \tau$: we can write $\tau = p_\tau L_1 L_0 s_\tau$.
            Let $\mathfrak{B}' = (p_{\tau} \circ L_1)^{-1}(\mathfrak{B})$ and $\tau' = L_0 \circ s_\tau \circ p_\tau \circ L_1$: we have that $\tau'^{-1}(\mathfrak{B}') = \mathfrak{B}'$.
            \item $L_1 L_0 \not\sqsubseteq_f \tau$: then, $\tau$ begins with a $L_0$ and ends with a $L_1$.
        \end{enumerate}
    In both cases, we can come back to the case where $\tau$ begins with a $L_0$ and ends with a $L_1$.

    Now, we apply Lemma~\ref{lem:TechnicalLemma} to show that every state of $\mathfrak{B}$  has property $(H)$. 
    $\mathfrak{B}$ can be written as $\psi^{-1}(\text{forget}(\mathfrak{A}))$ for some Sturmian morphism $\psi$.
    Since the transformation $\text{forget}$ does not modify the transitions of an $\omega$-automaton, this yields that every state of $\psi^{-1}(\mathfrak{A})$ also has property $(H)$. Since by assumption $\psi^{-1}(\mathfrak{A})$ accepts an infinite word, it follows that it is total.
\end{proof}

Let $f$ be the Fibonacci word, i.e. the substitutive word associated with the substitution $\sigma_f(0) = 01, \sigma_f(1) = 0$.
Since Lemma~\ref{lem:TechnicalLemma} holds when $\phi = \sigma_f^n$ ($n \geq 1$), by adapting the proof of Theorem~\ref{prop:SturmianOmega}, we obtain an equivalent statement for $f$:

\begin{corollary}\label{CompleteFibo}
    Let $\mathfrak{A}$ be an $\omega$-automaton which accepts $f$.
    Then, there exists $n \in \mathbb{N}$ such that $\sigma_f^{-n}(\mathfrak{A})$ is total.
\end{corollary}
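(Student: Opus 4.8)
The plan is to follow the skeleton of the proof of Theorem~\ref{prop:SturmianOmega}, but to exploit that $f$ is itself a fixed point of $\sigma_f$, so that the constant directive sequence $(\sigma_f)_{n\in\mathbb N}$ replaces the characteristic-word and $\{L_0,L_1\}$-restriction machinery used there. The first step is to observe that, although $\sigma_f\notin\mathcal{S}_{St}^\ast$ (its incidence matrix has determinant $-1$, whereas every element of $\mathcal{S}_{St}^\ast$ has determinant $+1$), the proof of Lemma~\ref{lem:TechnicalLemma} never uses membership in $\mathcal{S}_{St}^\ast$ per se: it only consumes the two combinatorial facts $\phi(1)\sqsubseteq_p\phi(0)$ (extracted from ``$\phi$ ends with $L_1$'') and ``$\phi(1)$ begins with $0$'' (extracted from ``$\phi$ begins with $L_0$''). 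So I would re-verify these two facts for $\phi=\sigma_f^k$ with $k\geq 1$: from $\sigma_f^k(0)=\sigma_f^{k-1}(0)\sigma_f^{k-1}(1)$ and $\sigma_f^k(1)=\sigma_f^{k-1}(0)$ one gets $\sigma_f^k(0)=\sigma_f^k(1)\,\sigma_f^{k-1}(1)$, hence $\sigma_f^k(1)\sqsubseteq_p\sigma_f^k(0)$; and $\sigma_f^k(1)=\sigma_f^{k-1}(0)$ begins with $0$ since $\sigma_f^j(0)$ always begins with $0$. Consequently the conclusion of Lemma~\ref{lem:TechnicalLemma} holds verbatim for every $\phi=\sigma_f^k$: if $\phi^{-1}(\mathfrak C)=\mathfrak C$, then every state of $\mathfrak C$ has property $(H)$.

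Second, I would manufacture such a self-reproducing automaton for a power of $\sigma_f$. Since $\sigma_f^{-1}$ acts on the finite set $\mathfrak S(\mathfrak A)$, Fact~\ref{fct:LoopAutomata} yields $n<m$ with $\sigma_f^{-n}(\mathfrak A)=\sigma_f^{-m}(\mathfrak A)$. Set $\mathfrak B=\sigma_f^{-n}(\mathfrak A)$ and $k=m-n\geq 1$. Using Proposition~\ref{prop:DesubComp}, $(\sigma_f^{k})^{-1}(\mathfrak B)=\sigma_f^{-m}(\mathfrak A)=\sigma_f^{-n}(\mathfrak A)=\mathfrak B$, so $\mathfrak B$ is fixed by $(\sigma_f^{k})^{-1}$. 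By the extended lemma of the first step (with $\phi=\sigma_f^{k}$), every state of $\mathfrak B=\sigma_f^{-n}(\mathfrak A)$ has property $(H)$.

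Third, I would supply non-emptiness and conclude. Because $\sigma_f(f)=f$, we have $\sigma_f^n(f)=f$, which is accepted by $\mathfrak A$; by Proposition~\ref{prop:DesubAuto}, $f\in\mathcal L_\infty(\sigma_f^{-n}(\mathfrak A))$, so $\sigma_f^{-n}(\mathfrak A)$ has an infinite computation starting from an initial state. Combining this with the fact that all its states have property $(H)$, the dichotomy recorded after Definition~\ref{def:PropH} forces $\sigma_f^{-n}(\mathfrak A)$ to be total, which is precisely the desired $n$.

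The main obstacle, and the only genuinely non-routine point, is the first step: Lemma~\ref{lem:TechnicalLemma} is stated for $\phi\in\mathcal{S}_{St}^\ast$, but $\sigma_f$ is not a product of elementary Sturmian morphisms, so the statement cannot be cited as a black box. The work is to isolate exactly which properties of $\phi$ the lemma's proof actually uses and to check them for $\sigma_f^k$. I should also be careful not to pass through the $\mathrm{forget}$ construction used in Theorem~\ref{prop:SturmianOmega}: there it was needed to make the language shift-invariant so that the characteristic word could be brought in, but since $\mathrm{forget}$ enlarges the set of initial states, using it here would only establish totality of $\mathrm{forget}(\sigma_f^{-n}(\mathfrak A))$, a strictly weaker conclusion than totality of $\sigma_f^{-n}(\mathfrak A)$ itself. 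Working directly with $\mathfrak A$ and the constant directive sequence $(\sigma_f)_{n\in\mathbb N}$ avoids this pitfall.
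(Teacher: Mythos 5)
Your proposal is correct and is essentially the paper's own proof: the paper's argument is exactly the sketch you fleshed out, namely observing that Lemma~\ref{lem:TechnicalLemma} holds for $\phi = \sigma_f^{k}$ ($k \geq 1$) and then adapting the proof of Theorem~\ref{prop:SturmianOmega}, the adaptation amounting to your pigeonhole (Fact~\ref{fct:LoopAutomata}) plus fixed-point non-emptiness argument, with the forget/characteristic-word machinery dropping out since $f$ is already a characteristic fixed point. One minor overstatement on your part: since $\sigma_f^2 = L_0 \circ L_1 \in \mathcal{S}_{St}^\ast$ begins with $L_0$ and ends with $L_1$, you could in fact cite Lemma~\ref{lem:TechnicalLemma} as a black box after doubling the exponent (using $(\sigma_f^{2(m-n)})^{-1}(\mathfrak{B}) = \mathfrak{B}$), so re-verifying the lemma's proof for $\sigma_f^{k}$, while correct and exactly what the paper does, is not strictly unavoidable.
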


This combinatorial result can be thought in dynamical terms:

\begin{corollary}
    A sofic subshift $\mathbb S$ contains $f$ iff $\mathbb S$ contains some $\sigma_f^n(\mathcal{A}^\mathbb{N})$.
\end{corollary}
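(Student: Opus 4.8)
The plan is to read off the dynamical corollary directly from the combinatorial Corollary~\ref{CompleteFibo}, using the correspondence recalled in the definitions: a sofic shift $\mathbb{S}$ is exactly $\mathcal{L}_\infty(\mathfrak{A})$ for some $\omega$-automaton $\mathfrak{A}$ all of whose states are initial ($I = Q$). With this dictionary fixed, I would treat the two implications separately, the content being entirely carried by Corollary~\ref{CompleteFibo}.

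For the easy direction, I would use that $f$ is a fixed point of $\sigma_f$, hence of every power, so that $f = \sigma_f^n(f)$ for all $n$. If $\sigma_f^n(\mathcal{A}^\mathbb{N}) \subseteq \mathbb{S}$, then since $f \in \mathcal{A}^\mathbb{N}$ we get $f = \sigma_f^n(f) \in \sigma_f^n(\mathcal{A}^\mathbb{N}) \subseteq \mathbb{S}$, so $\mathbb{S}$ contains $f$. No further structure of $\mathbb{S}$ is needed beyond this membership.

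For the main direction, suppose $f \in \mathbb{S}$ and write $\mathbb{S} = \mathcal{L}_\infty(\mathfrak{A})$ with $\mathfrak{A}$ having all states initial; then $\mathfrak{A}$ accepts $f$. Corollary~\ref{CompleteFibo} provides some $n$ such that $\sigma_f^{-n}(\mathfrak{A})$ is total, i.e. $\mathcal{L}_\infty(\sigma_f^{-n}(\mathfrak{A})) = \mathcal{A}^\mathbb{N}$. By Proposition~\ref{prop:DesubComp} we have $\sigma_f^{-n}(\mathfrak{A}) = (\sigma_f^n)^{-1}(\mathfrak{A})$, and iterating Proposition~\ref{prop:DesubAuto} identifies its language as $(\sigma_f^n)^{-1}(\mathbb{S}) = \{u \in \mathcal{A}^\mathbb{N} : \sigma_f^n(u) \in \mathbb{S}\}$. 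Totality then says this preimage is all of $\mathcal{A}^\mathbb{N}$, i.e. $\sigma_f^n(u) \in \mathbb{S}$ for every $u \in \mathcal{A}^\mathbb{N}$, which is precisely the inclusion $\sigma_f^n(\mathcal{A}^\mathbb{N}) \subseteq \mathbb{S}$.

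The only point requiring care is the bookkeeping at the end of the main direction: correctly matching the automaton-level statement ``$\sigma_f^{-n}(\mathfrak{A})$ is total'' with the set-level statement ``$\sigma_f^{-n}(\mathbb{S}) = \mathcal{A}^\mathbb{N}$'' through the language identity $\mathcal{L}_\infty(\sigma_f^{-n}(\mathfrak{A})) = \sigma_f^{-n}(\mathcal{L}_\infty(\mathfrak{A}))$, and then noting that a preimage filling the whole full shift is exactly the same as the image of the full shift landing inside $\mathbb{S}$. I do not expect any genuine obstacle here, since Corollary~\ref{CompleteFibo} already supplies all the mathematical substance; the corollary is essentially a reformulation.
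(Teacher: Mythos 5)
Your proof is correct and takes essentially the same route as the paper, which states this corollary as an immediate dynamical reformulation of Corollary~\ref{CompleteFibo}: one direction from $f$ being a fixed point of $\sigma_f$, the other from totality of $\sigma_f^{-n}(\mathfrak{A})$ translated through $\mathcal{L}_\infty(\sigma_f^{-n}(\mathfrak{A})) = \sigma_f^{-n}(\mathcal{L}_\infty(\mathfrak{A}))$ (Propositions~\ref{prop:DesubAuto} and~\ref{prop:DesubComp}). Your write-up merely makes explicit the bookkeeping the paper leaves implicit.
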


Because the Fibonacci word is aperiodic, containing $f$ means that there is a substitution $\tau$ such that $\tau(\mathcal{A}^\mathbb{N})$ is contained in $\mathbb{S}$.
Because the Fibonacci word is Sturmian, Berstel and Séébold \cite{berstel_seebold_2002} established that $\tau$ had to be a Sturmian morphism.
This new analysis specifies that $\tau$ can be chosen a power of $\sigma_f$.

\section{Open questions}

\begin{itemize}
    \item Following Proposition~\ref{prop:InfDesubWalk}, find an algorithm to find an accepted $\mathcal{S}$-adic word. There are technical difficulties to take into account the growth of the directive sequence, which should be solvable using results from \cite{richomme_2021}.
    \item Can our methods extend to Büchi $\omega$-automata, as in \cite{carton_thomas_2002}?
The difficulty is that the language of Büchi $\omega$-automata is not always compact, so Proposition~\ref{prop:FixedPointPower} does not apply. It may be possible to extend methods from \cite{carton_thomas_2002}.
\item For which sets of substitutions does Theorem~\ref{prop:SturmianOmega} hold?
\end{itemize}

\bibliographystyle{splncs04}
\bibliography{bibliography}

\newpage

\section{Appendix}

\subsection{Counterexamples for Proposition \ref{prop:FixedPointPower}}

Using the notation of Proposition \ref{prop:FixedPointPower}, consider $n < m$ minimal such that $\sigma^{-n}(\mathfrak{A}) = \sigma^{-m}(\mathfrak{A})$. There is no clear relationship between $n$, $m$ and $k$ the power of the fixed point accepted by $\mathfrak A$.

Here is a example with $m-n \geq 2$, but $\mathfrak A$ accepts a fixed point for $\sigma$.

\vspace{-1.5em}

\begin{figure}[ht!]
    \hfill    
    
        \begin{subfigure}[b]{0.52\textwidth}
    \centering
    $\sigma : \left\{\begin{array}{ccc}
    0 & \mapsto & 0 \\
    1 & \mapsto & 2 \\
    2 & \mapsto & 1 \\
    \end{array}\right.$
    \vspace{0.1\textwidth}
    \end{subfigure}   
%    \hfill
    \begin{subfigure}[b]{0.4\textwidth}
         \centering
          \begin{tikzpicture}

\node[circle, draw] (a) at (0,0) {};
\node[circle, draw] (b) at (1,0) {};
\node[circle, draw] (c) at (0.5, 0.866) {};

\draw[->, >=latex, loop left] (a) to node[left]{0} (a);
\draw[->, >=latex, loop above] (c) to node[above]{0} (c);
\draw[->, >=latex, loop right] (b) to node[right]{0} (b);
\draw[->, >=latex] (a) to node[below]{1} (b);
\draw[->, >=latex] (b) to node[right]{1} (c);
\draw[->, >=latex] (c) to node[left]{1} (a);

\end{tikzpicture}
          \label{CEx1}
    \end{subfigure}
    \hfill
    \hspace{0.05\textwidth}
    \caption{$n = 0$ and $m = 2$, but $\mathfrak A$ accepts $0^\infty$, which is a fixed point for $\sigma$.}
\end{figure}
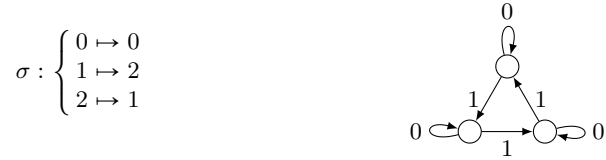

\vspace{-1.5em}

Next is an example where $m-n$ is lesser than the power required to have a fixed point:
\vspace{-1.5em}
\begin{figure}
    \centering
    \hfill
    \begin{subfigure}[b]{0.4\textwidth}
        $
    \arraycolsep=2pt\def\arraystretch{1}
    \tau : \left\{\begin{array}{ccc}
    0 & \mapsto & 11 \\
    1 & \mapsto & 00
    \end{array}\right.$ 
    \vspace{0.1\textwidth}
    \end{subfigure}
%    \hfill
    \begin{subfigure}[b]{0.4\textwidth}
        \includegraphics[width = \textwidth]{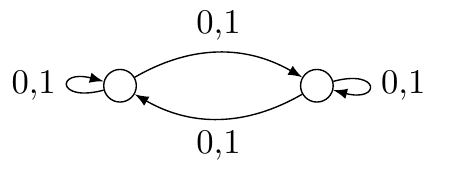}
        \centering
    \end{subfigure}
    \hfill
    \caption{$\tau^{-1}(\mathfrak{B}) = \mathfrak{B}$, so $m = 1$ and $n=0$.
$\tau$ has no fixed point: $\mathfrak{B}$ cannot contain a fixed point for $\tau^{m-n}$.
However, $\mathfrak{B}$ contains a fixed point for $\tau^2$.}
\end{figure}
\vspace{-2.5em}

\subsection{There is not always a total automaton in $\mathcal S^{-\infty}(\mathfrak A)$}
Theorem~\ref{prop:SturmianOmega} does not generalize straightforwardly to any set of substitutions: in general, $\mathcal S^{-\infty}(\mathfrak A)$ may not contain a total automaton, even under classical dynamical constraints.
For instance, consider the following $\omega$-automaton $\mathfrak{A}_{H}$ and susbtitution $\sigma_{H}$:

\vspace{-1.5em}

\begin{figure}
    \centering
    \hfill
    \begin{subfigure}[b]{0.4\textwidth}
    \centering
    $\sigma_{H} : \left\{ \begin{array}{ccc}
    0 & \mapsto & 0120 \\
    1 & \mapsto & 11220011 \\
    2 & \mapsto & 222000111222
\end{array} \right.$
\vspace{.1\textwidth}
     \end{subfigure}
    \hfill
   \begin{subfigure}[b]{0.4\textwidth}
    \centering
    \begin{tikzpicture}
    \node[circle, draw] (a) at (0,0) {};
    \node[circle, draw] (b) at (1,0) {};
    \node[circle, draw] (c) at (0.5, 0.866) {};

    \draw[->, >=latex, loop left] (a) to node[left]{0} (a); 
    \draw[->, >=latex] (b) to node[below]{0} (a); 
    \draw[->, >=latex, loop above] (c) to node[left]{1} (c); 
    \draw[->, >=latex] (a) to node[left]{1} (c); 
    \draw[->, >=latex, loop right] (b) to node[right]{2} (b); 
    \draw[->, >=latex] (c) to node[right]{2} (b); 
\end{tikzpicture}
     \end{subfigure}
    \hfill
    \caption{An $\omega$-automaton stable by desubstitution by $\sigma_H$.}
\end{figure}
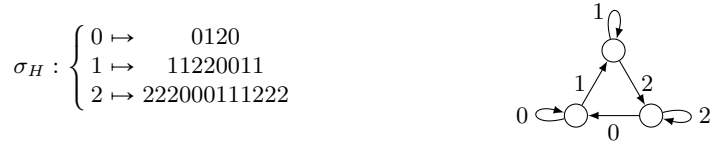

\vspace{-1.5em}

Notice that $\sigma_{H}$ is primitive, and that the three purely substitutive words generated by $\sigma_{H}$ are not eventually periodic.
However, $\mathfrak{A}_{H}$ is not total, and $\sigma_{H}^{-1}(\mathfrak{A}_{H}) = \mathfrak{A}_{H}$, so there is no total automaton in $\mathcal S^{-\infty}(\mathfrak A)$.
In dynamical terms, it means that the sofic shift contains the associated purely substitutive words, but contains no factor of the form $\sigma_{H}^k(\mathcal{A}^\mathbb{N})$.

%%endAppendix

\end{document}